\newtheorem{theorem}{Theorem}[section]
\newtheorem{lemma}[theorem]{Lemma}
\newtheorem{proposition}[theorem]{Proposition}
\newtheorem{corollary}[theorem]{Corollary}
\newtheorem{remark}[theorem]{Remark}
\def\cB{\mathcal{B}}
\def\cJ{\mathcal{J}}
\def\cL{\mathcal{L}}
\def\cM{\mathcal{M}}
\def\cN{\mathcal{N}}
\def\R{\mathbb R}
\let\emptyset\varnothing
\newcommand*{\rom}[1]{\expandafter\@slowromancap\romannumeral #1@}
\begin{document}

	\title{Hypothesis testing for populations of networks}
	
	\author{
		\name{Li Chen\textsuperscript{a}, \quad
			Jie Zhou\textsuperscript{b},
			\quad Lizhen Lin\textsuperscript{c}$^{\ast}$\thanks{$^\ast$Corresponding author. E-mail: lizhen.lin@nd.edu}}
		\affil{\textsuperscript{a}College of Mathematics, Southwest Minzu University, Chengdu, Sichuan, China \\
			\textsuperscript{b}College of Mathematics, Sichuan University, Chengdu, Sichuan, China \\
			\textsuperscript{c}Department of Applied and Computational Mathematics and Statistics, \\ University of Notre Dame,
			South Bend, Indiana, USA}
	}

	\maketitle
	
	\begin{abstract}
		It has become an increasingly  common practice in modern science and engineering to collect samples of multiple network data in which a network serves as a basic data object. The increasing prevalence of multiple network data calls for developments of models and theories that can deal with inference problems for populations of networks. 
		In this work, we propose a general procedure for hypothesis testing of networks and in particular, for differentiating  distributions of two samples of networks. We consider a very general framework which allows us to perform test on  large and    sparse networks. Our contribution is two-fold: (1) We propose a test statistics based on the singular value of a generalized Wigner matrix. The asymptotic null distribution of the statistics is shown to follow the Tracy--Widom distribution as the number of nodes tends to infinity.  The test also yields asymptotic  power guarantee with the power tending to one under the alternative; (2) The test procedure is adapted for change-point detection in dynamic networks  which is proven to be consistent in detecting the change-points.  In addition to theoretical guarantees, another appealing feature of this adapted procedure is that it provides a principled and simple method for selecting the threshold that is also allowed to vary with time. Extensive simulation studies and real data analyses demonstrate the superior performance of our procedure with competitors.
	\end{abstract}
	
	\begin{keywords}
		Change-point detection; Dynamic networks; Hypothesis testing; Network data; Tracy--Widom distribution.
	\end{keywords}

	\section{Introduction}
	
	One of the  unique features in modern data science is the increasing availability of complex data in non-traditional forms. Among the newer forms of data,  network has arguably emerged as one of the most important and powerful data types.  A network,  an abstract object consisting of a set of nodes and edges, can be broadly used to represent interactions among a set of agents or entities and one can find its applications in virtually any scientific field. The ubiquity of network data in diverse fields ranging from biology \citep{chen2006, cline2007}, physics \citep{Gergana2012Overview, Andrzej2014Modeling}, social science \citep{Hoff2002Latent, Snijders2003A} to engineering \citep{Leonardi2013Tight, chen2010network} has spurred fast developments in models, theories and algorithms for the field of network analysis, see e.g., \cite{1959erdos, holland1983stochastic, newmanblock, ball2011efficient, graphon1,  rohe_yu,  decelle2011,  amini2014semidefinite, bickel2009}. 
	The existing literature, however,  has largely been focusing on  inference of one single (often large) network. The recent advancement in technology and computer prowess has led to the increasing  prevalence  of network data available in multiple networks in which a network serves as the basic data object. For instance, such datasets can be found  in neuroscience \citep{bassett2008}, cancer study  \citep{zhang2009}, microbiome study  \citep{cai2019}, and social interactions  \citep{kossinets2006,eagle2009}. There is a strong need for development of models and theories that can deal with  such data sets, and  more broadly, for inference of population of networks. 
	
	One has already seen a growing effort in this direction. \cite{ginestet2017} proposes a geometric framework for hypothesis tests of populations of networks  viewing a weighted network as a point on a manifold. Along the same line, \cite{paperwitheric} provides  geometric characterization of space of all unlabeled  networks which serve as the foundation for inference based on Fr\'echet mean of networks. In addition, \cite{NIPS2017_7282} provides a general framework for clustering network objects. \cite{2020arXiv200404765J} proposes a Gaussian process based framework for regression and classification with network inputs.  \cite{dunson-durante17} proposes a Bayesian nonparametric approach for modeling the populations of networks.
	
	One of commonly encountered problems for inference of populations of networks is  hypothesis testing which has significant applications,  but remains largely understudied especially for large networks. Among the few existing work in the literature, besides \cite{ginestet2017} as mentioned above, \cite{tang2017}  carries out  hypothesis tests using random dot product graph model via adjacency spectral embedding. \cite{ghoshdastidar2017a} proposes two test statistics based on estimates of the Frobenius norm and spectral norm between link probability matrices of the two samples, the key challenge of which  lies in choosing a  threshold for the test statistics. \cite{ghoshdastidar2018} uses the same statistics as \cite{ghoshdastidar2017a} and proves asymptotic normality for the statistics. \cite{ghoshdastidar2018} further proposes a test statistics based on the extreme eigenvalues of a scaled and centralized matrix and proves that the new statistics asymptotically follows the Tracy--Widom law \citep{tracy1996}. Most of the literature, however, focuses on the case where the number of nodes for each network is fixed, which greatly limits the scope of inference.
	
	The initial focus of our work is on hypothesis testing for two samples of networks including large and sparse networks.  We propose a very intuitive testing statistics which yields theoretical guarantees. More specifically, we prove that its asymptotic null distribution follows the Tracy--Widom distribution and the asymptotic power tends to 1 under the alternative. One of the appealing features of our approach is that our test adopts a very general framework in which the number of the nodes are allowed to grow to infinity, while  most of the existing methods assume that the number of nodes is fixed, which is not always a practical assumption since many modern networks are often large and sparse. We then adapt our test statistics for a change-point detection procedure in dynamic networks and prove its consistency in detecting change-points. We provide a principled method for selecting the threshold level in the change-point detection procedure based on the asymptotic distribution of the testing statistics and the threshold is allowed to vary with time. This is appealing comparing to many existing change-point detection approaches which require either a cross-validation for selecting the threshold or a careful tuning of the parameters. Extensive simulation studies and two real data analyses demonstrate the superior performance of our procedure in comparing with others in both tasks. 
	
	The paper is organized as follows. In Section \ref{sec-twosample}, we propose a testing statistics and throughly study its asymptotic properties. Section \ref{sec-changepoint} is devoted to a change-point detection procedure for dynamic networks by adapting the testing statistics derived in Section \ref{sec-twosample}. Simulation studies are carried out in Section \ref{sec-simu} and real data examples are presented in Section \ref{sec-data}. Technical proofs can be found in the appendix.
	
	\section{Two-sample hypothesis testing for networks}
	\label{sec-twosample}
	
	\subsection{Notation}
	We first introduce some notations that will be used throughout the paper.  For a set $\cN$, $|\cN|$ denotes its cardinality. $TW_1$ denotes the Tracy--Widom distribution with index 1. $\chi^2(n)$ denotes the Chi-squared distribution with $n$ degrees of freedom. 
	For a square matrix $B \in \R^{n \times n}$, $B_{i j}$ denotes its $(i,j)$ entry, $B_{i \cdot}$ is the $i$th row of $B$, and $B_{\cdot i}$ is the $i$th column of $B$. 
	For a symmetric matrix $B \in \R^{n \times n}$, $\lambda_j(B)$ denotes its $j$th largest eigenvalue, ordered as $\lambda_1(B) \geq \lambda_2(B) \geq \dots \geq \lambda_n(B)$, $\sigma_1(B)$ is the largest singular value. Write $X_n \rightsquigarrow X$ if a sequence of random variables $\{X_n\}_{n = 1}^{\infty}$ converges in distribution to random variable $X$. $\lfloor x \rfloor$ denotes the largest integer but no greater than $x \in \R$. $I(\cdot)$ denotes indicator function. For two sequences of real numbers $\{x_n\}$ and $\{y_n\}$, we have the following notations:
	
	\hspace{0.2cm} $y_n = O_n(x_n)$: there exists a positive constant $M$ such that $\underset{n \rightarrow \infty}{\lim} |\frac{y_n}{x_n}| \leq M$.
	
	\hspace{0.2cm} $y_n = o_n(x_n)$: $\underset{n \rightarrow \infty}{\lim}\frac{y_n}{x_n} = 0$.
	
	
	
	\hspace{0.2cm} $y_n = o_p(x_n)$: $\underset{n \rightarrow \infty}{\lim} P\big( \big| \frac{y_n}{x_n} \big| \geq \varepsilon \big) = 0$ for any positive $\varepsilon$.
	
	\subsection{Problem setup and some existing tests} 
	\label{sec:setup}
	We consider two samples of networks with $n$ nodes and sample sizes $m_1$ and $m_2$ respectively.   More specifically, we assume one observes symmetric binary adjacency matrices  $A_1^{(1)},\ldots,A_1^{(m_1)}$ that are generated from symmetric link probability matrix $P_1$ with $A_{1,i j}^{(k)} \sim \text{Bernoulli}(P_{1, i j})$, $k = 1,2,\ldots,m_1$, $i,j = 1,2,\ldots,n$, and another sample of  adjacency matrices $A_2^{(1)},\ldots,A_2^{(m_2)}$  generated from the same model with link probability matrix $P_2$. Our goal is to test whether  the two samples of networks have same graph structure or not, which is equivalent to testing:
	\begin{equation} \label{test for two-sample}
	H_0: P_1 = P_2 ~ \text{against} ~ H_1: P_1 \neq P_2.  
	\end{equation}
	
	For the case of $m_1 = m_2 = 1$ and a fixed $n$, \cite{tang2017} focuses on random dot product graphs by applying the adjacency spectral embedding, whereas \cite{ghoshdastidar2018} focuses on the inhomogeneous Erd\H{o}s--R\'{e}nyi graphs and proposes a test based on eigenvalues. 
	
	For the case of large $m_1, m_2$ and again a fixed number of nodes $n$, \cite{ginestet2017} proposes a $\chi^2$-type test based on a geometric characterization of the space of graph Laplacians and a notion  of Fr\'{e}chet means \citep{frechet, linclt}. As a simplification of the statistics in \cite{ginestet2017}, \cite{ghoshdastidar2018} sets $m_1 = m_2 = m$ and obtains the test statistics as follows:
	\begin{equation}
	T_{\chi^2} = \sum_{i < j} \frac{(\bar{A}_{1,i j} - \bar{A}_{2,i j})^2}{\frac{1}{m (m - 1)} \sum_{k = 1}^m \left(A_{1,i j}^{(k)} - \bar{A}_{1,i j}\right)^2 + \frac{1}{m (m - 1)} \sum_{k = 1}^m \left(A_{2,i j}^{(k)} - \bar{A}_{2,i j}\right)^2}, \label{chi2-type test}
	\end{equation}
	where $\bar{A}_{u,i j} = \frac{1}{m} \sum_{k = 1}^m A^{(k)}_{u,i j}$ with $u = 1, 2$. Then $T_{\chi^2} \rightarrow \chi^2\big(\frac{n (n - 1)}{2}\big)$ as $m \rightarrow \infty$. 
	We call this method $\chi^2$-type test.
	
	The case of large $n$ and fixed $m_1$ and $m_2$ is one of the likely scenarios in practice and  is thus perhaps more interesting. \cite{ghoshdastidar2018} uses the same statistics as \cite{ghoshdastidar2017a} as follows: 
	\begin{equation}
	T_N = \frac{\sum_{i < j} \left( \sum_{k \leq m / 2} A_{1, i j}^{(k)} - A_{2, i j}^{(k)} \right) \left( \sum_{k > m / 2} A_{1, i j}^{(k)} - A_{2, i j}^{(k)} \right)} {\sqrt{ \sum_{i < j} \left( \sum_{k \leq m / 2} A_{1, i j}^{(k)} + A_{2, i j}^{(k)} \right) \left( \sum_{k > m / 2} A_{1, i j}^{(k)} + A_{2, i j}^{(k)} \right) }}. \label{Normal-type test}
	\end{equation}
	\cite{ghoshdastidar2018} proves the asymptotic normality of $T_N$ as $n \rightarrow \infty$. 
	We refer this method to $N$-type test. 
	
	\subsection{Proposed test statistics} 
	\label{subsection General model}
	
	In proposing our test statistics, we consider a very general setting in which the number of nodes can grow to infinity instead of being fixed like in most of the existing literature,  and  the sample sizes $m_1$ and $m_2$  grow in an appropriate rate. 
	We first introduce  the centralized and re-scaled matrix $Z$ with entries given as follows:
	\begin{equation} \label{Z for two-smaple}
	Z_{ij} = \frac{\bar{A}_{1,i j} - \bar{A}_{2,i j}}{\sqrt{ (n - 1) \left[  \frac{1}{m_1}P_{1,i j} \big( 1 - P_{1,i j}\big) + \frac{1}{m_2}P_{2,i j} \big( 1 - P_{2,i j}\big)\right]  }},
	\end{equation}
	where $\bar{A}_{u,i j} = \frac{1}{m_u} \sum_{k = 1}^{m_u} A_{u,i j}^{(k)}$ with $u = 1,2$ and $i, j = 1, \ldots, n$.
	
	The matrix $Z$ involves unknown link probability matrices $P_1$ and $P_2$ thus can not be directly used as a test statistics. As an alternative, one can choose some appropriate  plugin estimates for $P_1$ and $P_2$, and  some of these estimates attain good properties for the resulting tests as we will see in the following discussions.
	
	Denote  $\hat{P}_1$ and $\hat{P}_2$ as some plugin estimators of $P_1$ and $P_2$ respectively, then the empirical standardized matrix $\hat{Z}$ of $Z$ can be written with entries as
	\begin{equation} \label{Z_hat for two-sample}
	\begin{split}	
	\hat{Z}_{ij} = \frac{\bar{A}_{1,i j} - \bar{A}_{2,i j}}{\sqrt{(n - 1) \left[  \frac{1}{m_1} \hat{P}_{1,i j} \big( 1 - \hat{P}_{1,i j}\big) + \frac{1}{m_2} \hat{P}_{2,i j} \big( 1 - \hat{P}_{2,i j}\big) \right] }}, \ 
	i, j = 1, 2, \ldots, n. 
	\end{split}
	\end{equation} 
	
	We  propose to use the largest singular value of $\hat{Z}$, after suitable shifting and scaling, as our test statistics:
	\begin{equation} \label{statistics for two-sample test}
	T_{TW_1} = n^{2/3} \big[ \sigma_1 (\hat{Z}) - 2 \big].
	\end{equation}
	Given a significance level $\alpha \in (0,1)$, the rejection region $Q$ for $H_0$ in test \eqref{test for two-sample} is
	\begin{equation} \label{reject rule for two-sample}
	Q = \{T_{TW_1}|T_{TW_1} \geq \tau_{\alpha / 2}\},
	\end{equation}
	where $\tau_{\alpha / 2}$ is the corresponding $\alpha / 2$ upper quantile of $TW_1$. We then have the following results.
	
	\begin{theorem}[General asymptotic null distribution]
		\label{asymptotic null distribution result for GENERAL}
		Let $A_1^{(1)},\ldots,A_1^{(m_1)}$ be a sample of networks generated from a link probability matrix $P_1$ with $n$ nodes, and $A_2^{(1)},\ldots,A_2^{(m_2)}$ be another sample generated from a link probability matrix $P_2$ with the same number of  nodes. Let $\hat{Z}$ be given as in \eqref{Z_hat for two-sample}. Given some estimated matrices $\hat{P}_u$ of $P_u, u = 1,2$, if $\sup_{i,j} |\hat{P}_{u,i j} - P_{u, ij}| = o_p(n^{- 2 / 3})$, then the following holds under the null hypothesis in \eqref{test for two-sample}:
		\begin{equation} \label{Tw_1 converge}
		n^{2/3} [\lambda_1 (\hat{Z}) - 2] \rightsquigarrow TW_1, ~
		n^{2/3} [ - \lambda_n (\hat{Z}) - 2] \rightsquigarrow TW_1.
		\end{equation}
	\end{theorem}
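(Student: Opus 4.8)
The plan is to argue in three stages: reduce the statistic to an \emph{oracle} matrix built from the true probabilities, invoke edge universality for Wigner matrices on that oracle, and then transfer the conclusion to $\hat Z$ by a perturbation argument. First I would work with the oracle matrix $Z$ of \eqref{Z for two-smaple} and verify that, under $H_0$, it is a generalized Wigner matrix. Since $P_1=P_2=:P$, the numerator $\bar A_{1,ij}-\bar A_{2,ij}$ has mean zero, and by independence of the two samples its variance equals $s_{ij}:=\frac{1}{m_1}P_{ij}(1-P_{ij})+\frac{1}{m_2}P_{ij}(1-P_{ij})$, which is exactly the quantity under the square root in the denominator of $Z_{ij}$. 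Hence $\mathrm{E}[Z_{ij}]=0$ and $\mathrm{Var}(Z_{ij})=\frac{1}{n-1}$ for $i\neq j$; setting the (self-loop-free) diagonal to zero, the row variance sums equal $1$, and the entries $\{Z_{ij}\}_{i\le j}$ are independent because distinct edges are drawn independently across all networks. Thus $Z$ sits in the Wigner normalization, with $\sigma_1(Z)=\max(\lambda_1(Z),-\lambda_n(Z))$.

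Second, I would apply the Tracy--Widom edge universality theorem for generalized Wigner matrices \citep{tracy1996}, together with its universality extensions, to conclude $n^{2/3}[\lambda_1(Z)-2]\rightsquigarrow TW_1$ and $n^{2/3}[-\lambda_n(Z)-2]\rightsquigarrow TW_1$. The nontrivial input is checking the hypotheses of that theorem. Beyond the mean-zero, independence, and variance-normalization properties already established, one must verify a uniform sub-exponential (or high-moment) tail bound on the rescaled entries $\sqrt{n-1}\,Z_{ij}=(\bar A_{1,ij}-\bar A_{2,ij})/\sqrt{s_{ij}}$. These are standardized differences of averages of bounded Bernoulli variables, so a Bernstein/Rosenthal estimate controls their moments provided $m_1,m_2$ grow at the ``appropriate rate'' relative to the sparsity of $P$; this is precisely the regime stipulated in the setup, and I would record the explicit requirement at this point.

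Third comes the transfer from $Z$ to $\hat Z$, which I expect to be the main obstacle. Writing $\hat Z_{ij}=Z_{ij}(1+\epsilon_{ij})$ with $\epsilon_{ij}=\sqrt{s_{ij}/\hat s_{ij}}-1$, a first-order expansion of $\hat s_{ij}$ together with $\sup_{i,j}|\hat P_{u,ij}-P_{u,ij}|=o_p(n^{-2/3})$ gives the uniform control $\max_{i,j}|\epsilon_{ij}|=o_p(n^{-2/3})$. By Weyl's inequality it then suffices to show $n^{2/3}\|\hat Z-Z\|_{\mathrm{op}}=o_p(1)$, i.e. $\|\hat Z-Z\|_{\mathrm{op}}=o_p(n^{-2/3})$ for the Hadamard perturbation $\hat Z-Z=Z\circ E$ with $E=(\epsilon_{ij})$. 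The subtlety is that the crude estimate $\|Z\circ E\|_{\mathrm{op}}\le\|Z\circ E\|_{F}\le(\max_{i,j}|\epsilon_{ij}|)\,\|Z\|_{F}=o_p(n^{-2/3})\cdot O_p(n^{1/2})=o_p(n^{-1/6})$ loses a factor of $n^{1/2}$ and is \emph{not} sufficient. To recover the required rate one must exploit the randomness of $Z$ rather than its Frobenius size: conditionally on the estimators (or under a sample-splitting arrangement making $\hat P$ independent of the adjacency data entering $Z$), the matrix $Z\circ E$ has independent, mean-zero entries with per-entry variance $\epsilon_{ij}^2/(n-1)$, so a sharp operator-norm concentration inequality bounds $\|Z\circ E\|_{\mathrm{op}}$ by the maximal row standard deviation $\sigma_*=\max_i(\sum_j\epsilon_{ij}^2/(n-1))^{1/2}\le(\max_{i,j}|\epsilon_{ij}|)\cdot O(1)=o_p(n^{-2/3})$, plus a lower-order tail correction. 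This yields $\|\hat Z-Z\|_{\mathrm{op}}=o_p(n^{-2/3})$, and combining with the second stage gives the two claimed convergences. Carefully handling the dependence between $\hat P$ and the data when establishing this concentration bound is the delicate point I would work out most thoroughly.
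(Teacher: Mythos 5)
Your three-stage plan has the same skeleton as the paper's proof: your stages 1--2 are exactly the content of the paper's Lemma \ref{Asymptotic distributions of Z_star for two-sample} (under $H_0$ the oracle matrix $Z$ of \eqref{Z for two-smaple} is a generalized Wigner matrix with off-diagonal variance $1/(n-1)$, and the Tracy--Widom edge fluctuations follow from universality; the paper gets this by matching the first two moments of $Z$ against a Gaussian ensemble and invoking \cite{lee2014} and \cite{erdHos2012}, rather than by verifying tail conditions directly as you propose), and your stage 3 is the paper's reduction $\hat Z - Z = M \circ Z$ with $\sup_{i,j}|M_{ij}| = o_p(n^{-2/3})$, i.e.\ \eqref{Z hat Z}, followed by Weyl's inequality. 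The divergence is in how $\|\hat Z - Z\|_{op} = o_p(n^{-2/3})$ is established. The paper argues deterministically: through a sign-flipping construction it bounds $\|M\circ Z\|_{op} \le (\sup_{i,j}|M^*_{ij}|)\,\|Z\|_{op}$ and then uses $\|Z\|_{op}=O_p(1)$. You instead reject deterministic bounds (correctly observing that the Frobenius-type estimate only yields $o_p(n^{-1/6})$) and propose a matrix-concentration bound on $Z\circ E$, treating its entries as independent and mean zero conditionally on the estimators.

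That last step is where your proposal has a genuine gap. Conditioning on $\hat P_1,\hat P_2$ does \emph{not} make the entries of $Z$ independent or mean zero, because every estimator the paper contemplates (AVG, SBM, MNBS) is built from the very adjacency matrices whose averages form the numerator of $Z$; so a Bandeira--van Handel type inequality cannot be applied as stated. Sample splitting would restore independence but would prove a different theorem from the one claimed, since the statement allows $\hat P_u$ to be computed from the same data subject only to the sup-norm rate. You flag this dependence as ``the delicate point'' but do not resolve it, so the proof is incomplete at precisely the step where the paper's proof does its real work; moreover, even granting independence, your bound carries a $\sqrt{\log n}$-type correction involving $\max_{i,j}\|Z_{ij}\|_\infty$, which imposes joint conditions on $m_1,m_2$ and the sparsity of $P$ that appear nowhere in the theorem. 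It is fair to add that your skepticism about the deterministic route is substantive: the inequality $\|M\circ Z\|_{op}\le \big(\max_{i,j}|M_{ij}|\big)\|Z\|_{op}$ is false for general matrices (take $M_{ij}=\delta\,\mathrm{sign}(Z_{ij})$, so that $M\circ Z$ is $\delta$ times the entrywise absolute value of $Z$, whose operator norm is of order $\delta\sqrt n$ for a Wigner $Z$), and the paper's sign-flip argument treats $x'$ as a single unit vector even though its definition depends on the row index $i$ as well as $j$; a row-dependent choice only bounds $\|M\circ Z\|_{op}$ by $\sup_{i,j}|M_{ij}|$ times $\|Z\|_F$ or the operator norm of the entrywise absolute value of $Z$, which is not $O_p(1)$. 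So you identified a real difficulty that the paper's own argument glosses over, but your probabilistic replacement, as proposed, does not close it.
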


	\begin{remark}
		Theorem \ref{asymptotic null distribution result for GENERAL} is very general in the sense that it puts no structural conditions on the networks, nor does it impose any assumption on the type of  estimates for  $P_1$ and $P_2$ so long as they are estimated within $o_p(n^{- 2 / 3})$ error. 
	\end{remark}

	The following corollaries show  asymptotic type \rom{1} error control and  asymptotic power  for the rejection rule \eqref{reject rule for two-sample}.

	\begin{corollary} 
		[Asymptotic type \rom{1} error control]
		\label{asymptotic type \rom{1} error for GENERAL}
		Supposing assumptions in Theorem \ref{asymptotic null distribution result for GENERAL} hold, the rejection region in \eqref{reject rule for two-sample} has size $\alpha$.
	\end{corollary}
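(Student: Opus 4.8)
The plan is to reduce the corollary to Theorem~\ref{asymptotic null distribution result for GENERAL} by rewriting the two-sided statistic $T_{TW_1}$ in terms of the two extreme eigenvalues of $\hat{Z}$. Because the adjacency matrices are symmetric and the rescaling in \eqref{Z_hat for two-sample} is entrywise, $\hat{Z}$ is symmetric, so its singular values are the absolute values of its eigenvalues and $\sigma_1(\hat{Z}) = \max_j |\lambda_j(\hat{Z})| = \max\{\lambda_1(\hat{Z}), -\lambda_n(\hat{Z})\}$ on the event $\{\lambda_1(\hat{Z}) \geq 0 \geq \lambda_n(\hat{Z})\}$. First I would show this event has probability tending to one: by \eqref{Tw_1 converge}, $\lambda_1(\hat{Z}) \to 2$ and $\lambda_n(\hat{Z}) \to -2$ in probability, so with probability approaching $1$ we have $\lambda_1(\hat{Z}) > 0 > \lambda_n(\hat{Z})$, and on that event
\[
T_{TW_1} = n^{2/3}\big[\sigma_1(\hat{Z}) - 2\big] = \max\big\{\, n^{2/3}[\lambda_1(\hat{Z}) - 2],\ n^{2/3}[-\lambda_n(\hat{Z}) - 2] \,\big\}.
\]

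Writing $U_n = n^{2/3}[\lambda_1(\hat{Z}) - 2]$ and $V_n = n^{2/3}[-\lambda_n(\hat{Z}) - 2]$, the rejection event $\{T_{TW_1} \geq \tau_{\alpha/2}\}$ coincides, up to the negligible event above, with $\{U_n \geq \tau_{\alpha/2}\} \cup \{V_n \geq \tau_{\alpha/2}\}$. Next I would apply the union bound,
\[
P_{H_0}\big(T_{TW_1} \geq \tau_{\alpha/2}\big) \leq P\big(U_n \geq \tau_{\alpha/2}\big) + P\big(V_n \geq \tau_{\alpha/2}\big) + o(1),
\]
and invoke Theorem~\ref{asymptotic null distribution result for GENERAL}, which gives $U_n \rightsquigarrow TW_1$ and $V_n \rightsquigarrow TW_1$ under $H_0$. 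Since $\tau_{\alpha/2}$ is a continuity point of the $TW_1$ distribution function, convergence in distribution yields $P(U_n \geq \tau_{\alpha/2}) \to \alpha/2$ and likewise for $V_n$, whence $\limsup_n P_{H_0}(T_{TW_1} \geq \tau_{\alpha/2}) \leq \alpha$, establishing the asymptotic level-$\alpha$ control claimed.

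I do not expect a serious obstacle, as the statement is essentially a two-sided Bonferroni packaging of Theorem~\ref{asymptotic null distribution result for GENERAL}. The one point deserving care is the meaning of ``size $\alpha$'': the union bound delivers only an upper bound. Because the extreme eigenvalues $\lambda_1(\hat{Z})$ and $\lambda_n(\hat{Z})$ of a generalized Wigner matrix are asymptotically independent, the exact limiting rejection probability is $1 - (1 - \alpha/2)^2 = \alpha - \alpha^2/4$, strictly below the nominal $\alpha$, so the test is in fact asymptotically conservative. If a sharp limit rather than the Bonferroni bound is wanted, I would supplement the argument with this asymptotic independence to evaluate $\lim_n P_{H_0}(T_{TW_1} \geq \tau_{\alpha/2})$ exactly; for the stated conclusion of type~\rom{1} error control, the union bound already suffices.
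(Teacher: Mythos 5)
Your proposal is correct and follows essentially the same route as the paper: the paper's proof is exactly the union bound $P(T_{TW_1} \geq \tau_{\alpha/2}) \leq P\big(n^{2/3}[\lambda_1(\hat{Z}) - 2] \geq \tau_{\alpha/2}\big) + P\big(n^{2/3}[-\lambda_n(\hat{Z}) - 2] \geq \tau_{\alpha/2}\big)$ followed by invoking Theorem \ref{asymptotic null distribution result for GENERAL} to get $\alpha/2 + \alpha/2 + o_n(1)$. Your additional observations --- that one should first check $\sigma_1(\hat{Z}) = \max\{\lambda_1(\hat{Z}), -\lambda_n(\hat{Z})\}$ holds with probability tending to one, and that the union bound only yields asymptotic level at most $\alpha$ (so ``size $\alpha$'' is, strictly speaking, an upper bound, the exact limit being smaller under asymptotic independence of the extreme eigenvalues) --- are refinements the paper glosses over, but the core argument is identical.
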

	
	\begin{corollary}
		[Asymptotic power guarantee]
		\label{asymptotic power guarantee for GENERAL}
		Define a matrix $\tilde{Z} \in \R^{n \times n}$ with zero diagonal and for any $i \neq j$,
		\begin{equation} \label{Z tuta}
		\tilde{Z}_{i j} = \frac{P_{1,i j} - P_{2,i j}}{\sqrt{ (n - 1) \Big[ \frac{1}{m_1}P_{1,i j} \big( 1 - P_{1,i j}\big) + \frac{1}{m_2}P_{2,i j} \big( 1 - P_{2,i j}\big)\Big] }}.
		\end{equation}
		Under the assumptions of Theorem \ref{asymptotic null distribution result for GENERAL}, if $P_{1,i j}$ and $P_{2,i j}$ are such that $n^{- 2 / 3} [ \sigma_1(\tilde{Z}) - 4]^{- 1} \leq o_n(1)$, then
		\begin{equation*}
		P(T_{TW_1} \geq \tau_{\alpha / 2} ) = 1 - o_n(1).
		\end{equation*}
	\end{corollary}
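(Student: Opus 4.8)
The plan is to show that the signal condition forces the largest singular value of $\hat{Z}$ to sit well above the edge at $2$, so that $T_{TW_1}$ exceeds the fixed critical value $\tau_{\alpha/2}$ with probability tending to one. I would first rewrite the hypothesis $n^{-2/3}[\sigma_1(\tilde{Z}) - 4]^{-1} \le o_n(1)$ in the equivalent and more transparent form $n^{2/3}[\sigma_1(\tilde{Z}) - 4] \to \infty$, which states that the deterministic signal matrix $\tilde{Z}$ in \eqref{Z tuta} separates from the value $4$ strictly faster than the Tracy--Widom fluctuation scale $n^{-2/3}$.

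The backbone of the argument is a signal-plus-noise decomposition together with the reverse triangle (Weyl) inequality for the operator norm. Writing the numerator of $\hat{Z}_{ij}$ as the mean $P_{1,ij}-P_{2,ij}$ plus the centered fluctuation $(\bar{A}_{1,ij}-P_{1,ij})-(\bar{A}_{2,ij}-P_{2,ij})$, I would view $\hat{Z}-\tilde{Z}$ as precisely the centered, re-scaled noise matrix that Theorem \ref{asymptotic null distribution result for GENERAL} analyzes under the null, up to the error incurred by using the plug-in denominators $\hat{P}_u$ in place of $P_u$. Each off-diagonal entry of the centered noise has variance exactly $1/(n-1)$ by construction, so it is a generalized Wigner matrix of the type governed by Theorem \ref{asymptotic null distribution result for GENERAL}, and the plug-in error is controlled exactly as there through the assumption $\sup_{i,j}|\hat{P}_{u,ij}-P_{u,ij}|=o_p(n^{-2/3})$. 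Consequently $n^{2/3}[\sigma_1(\hat{Z}-\tilde{Z})-2]$ exhibits the same Tracy--Widom behavior as the null statistic, so that $\sigma_1(\hat{Z}-\tilde{Z}) = 2 + O_p(n^{-2/3})$.

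The conclusion then follows by combining $\sigma_1(\hat{Z}) \ge \sigma_1(\tilde{Z}) - \sigma_1(\hat{Z}-\tilde{Z})$ with the previous estimate:
\[
T_{TW_1} = n^{2/3}[\sigma_1(\hat{Z})-2] \ge n^{2/3}[\sigma_1(\tilde{Z})-4] - O_p(1),
\]
where the two subtracted values of $2$ (one from the edge, one from the noise) produce the threshold $4$ appearing in the hypothesis. Since $n^{2/3}[\sigma_1(\tilde{Z})-4]\to\infty$ while the remaining term is bounded in probability, the right-hand side diverges in probability and hence $P(T_{TW_1}\ge\tau_{\alpha/2}) = 1 - o_n(1)$ for the fixed quantile $\tau_{\alpha/2}$. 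The step I expect to be the main obstacle is making rigorous the claim that $\sigma_1(\hat{Z}-\tilde{Z})$ concentrates at $2$: one must certify that the inhomogeneous, possibly vanishing entry variances of large sparse networks still fall within the scope of the edge-universality result, and, crucially, that the plug-in denominator error remains negligible in operator norm \emph{under the alternative}. Because the numerators $\bar{A}_{1,ij}-\bar{A}_{2,ij}$ now carry the non-zero mean $P_{1,ij}-P_{2,ij}$, replacing $P_u$ by $\hat{P}_u$ perturbs not only the noise but also the signal, and a naive Frobenius-norm bound on this perturbation is too weak to beat the scale $n^{-2/3}$; a genuine spectral-norm control is needed. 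Both points are inherited from the proof of Theorem \ref{asymptotic null distribution result for GENERAL}, but the non-zero mean of the numerators must be explicitly accounted for, as should the harmless zero-diagonal convention used for $\tilde{Z}$.
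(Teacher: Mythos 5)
Your proposal is correct and follows essentially the same route as the paper: the paper likewise decomposes $\hat{Z}$ into the signal $\tilde{Z}$ plus a centered noise matrix $W$ (with the plug-in denominator error absorbed via the same Hadamard-product spectral-norm bound used in the proof of Theorem \ref{asymptotic null distribution result for GENERAL}), bounds $\sigma_1(W)$ near $2$ using the Tracy--Widom edge behavior, and applies the triangle inequality so that $n^{2/3}[\sigma_1(\tilde{Z})-4]\to\infty$ forces rejection with probability tending to one. The only difference is bookkeeping: the paper phrases the noise bound through an explicit quantile $\tau_{\beta}$ chosen so that $\beta = o_n(1)$, whereas you track the same quantity as an $O_p(1)$ term.
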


	\begin{remark}
		
		As mentioned  in the introduction, in \cite{ghoshdastidar2018}, a test statistics for comparing two large graphs  is proposed, and our test statistics appears to be similar in natural to theirs.
		However, there are some key distinctions between our method and theirs.  First, our testing statistics considers two-sample test on two populations of networks  which requires exploration of the proper interplay between the asymptotics in both the sample sizes of networks and nodes number.  
		Second, \cite{ghoshdastidar2018} proves the asymptotic Tracy--Widom law under the true link probability matrices, while in our paper, we consider various estimates of  link probability matrices (again based on multiple networks) and  prove the Tracy--Widom law theoretically. We also discuss the performance  of the resulting testing statistics under various estimators. Third, our testing statistics is modified for a novel and efficient change-point detection procedure and  the consistency of the change-point detection is also proved.
		
	\end{remark}

	\subsection{Different estimators of  link probability matrix} 
	\label{subsection Comparision}
	The testing statistics proposed in the previous section requires a plugin estimator for the link probability matrix based on a sample of networks. In this subsection, we investigate the properties of the tests corresponding to various different estimators for link probability matrix.
	
	We first consider a different but natural and simple estimator of $P_u$ by using the average of all the adjacency matrices in the same group. We denote this method as AVG and the link probability matrix estimator as $\hat{P}_{\text{AVG}, u}$, which is actually $\bar{A}_u$.
	
	It's not difficult to see that $$\sup_{i,j}|\hat{P}_{\text{AVG}, u,i j} - P_{u,i j}| = o_p\big( m_u^{- 1 / 2} \log(n)\big)$$ by applying Bernstein's inequality. To guarantee the asymptotic $TW_1$ in \eqref{Tw_1 converge}, it requires that $m_u = O_n(n^{4 / 3})$. 
	More specifically, the sample size $m_u$ needs to increase faster than nodes number $n$, so $m_u$ will exceed $n$ eventually as $n$ tends to infinity. Therefore, the AVG estimator will perform well  if the sample size is large enough. However, this is hard to hold in reality especially when the size of the network is large. Usually, for most practical applications, it would be more suitable to require $m_u$ to increase slower than $n$.
	
	We also consider an average estimator of $P_u$ based on the stochastic block model (SBM), which is similar in spirit to the estimator in \cite{ghoshdastidar2018} but with a different  algorithm for estimating the communities. Our main idea can be summarized as follows: First, assume the graphs are SBMs, or approximate them with SBMs by a weaker version of Szemer\'{e}di's regularity lemma (see \cite{lovasz2012}). Second, use one of the community detection algorithms such as  the goodness-of-fit test proposed in \cite{lei2016} to estimate the number of the communities $\hat{K}_u$. Then perform  clustering using for example the  spectral clustering algorithm (see, e.g.,  \cite{von2007}) to obtain estimates of  the membership vector $g_u \in \{1,\ldots,\hat{K}_u\}^n$ as well as the community set $\cB_{u, k} = \{i: 1 \leq i \leq n, g_{u, i} = k\}$, where $k = 1,2,\ldots,\hat{K}_u$ and $g_{u, i}$ is the $i$th element of $g_u$. Subsequently, $P_u$ is approximated by a block matrix $\hat{P}_{\text{SBM}, u}$ such that $\hat{P}_{\text{SBM}, u, i j}$ is the mean of the submatrix of $\bar{A}_{u}$ restricted to $\cB_{u, g_{u,i}} \times \cB_{u, g_{u,j}}$.
	
	Under further assumption that each community has size at least proportional to $n / K_u$, where $K_u$ is the true community number, it can be seen that the error of $\hat{P}_{\text{SBM}, u, i j} $ is $o_p(K_u m_u^{- 1 / 2} n^{- 1} \log n)$ \citep{lei2016}. This implies that only when $K_u = O_n(n^{\gamma_u}), \gamma_u < 1 / 3 + \alpha_u / 2$, and $m_u = O_n(n^{\alpha_u}), \alpha_u \geq 0$, the error condition in Theorem \ref{asymptotic null distribution result for GENERAL} holds. For large networks in practice,  the number of communities can be very large therefore such a condition might be hard to satisfy. Moreover, due to the potential double estimation in the process  (in estimating the number of communities as well as the  community membership), it may bring large error to the final test statistics, especially when the SBM assumption is not valid.

	We now discuss another explicit method for the link probability matrix estimates that can be used as the plugging estimates in the test statistics called the modified neighborhood smoothing (MNBS) estimator. 
	Let $\{\xi_i\}_{i = 1}^n$ be a 
	random sequence such that $\xi_i, i=1,\dots,n$, are $i.i.d.$ uniform random variables on $[0, 1]$. Conditional on this global sequence $\{\xi_i\}_{i = 1}^n$, we assume all the adjacency matrices $A^{(1)}, A^{(2)},\ldots,A^{(m)}$ in the same population share the same link probability matrix $P \in \R^{n \times n}$, which is modeled by a graphon function $f:[0,1]^2 \rightarrow [0,1]$ such that 
	\begin{align*}
	P_{ij}=f(\xi_i, \xi_j).
	\end{align*}
	Therefore, we have
	\begin{equation*}
	A_{i j}^{(k)} \mid  \{\xi_i\}_{i = 1}^n \sim \text{Bernoulli}(f(\xi_i, \xi_j)), 
	\end{equation*}
	independently for all $i \leq j$ and $k = 1, \dots, m$.
	
	We then 
	apply MNBS method proposed in \cite{zhao2019} to estimate $P$. 
	The essential idea of the MNBS procedure consists of the following steps: First, for the group of adjacency matrices $A^{(1)}, A^{(2)},\ldots,A^{(m)}$ generated from $P$, let $\bar{A} = \sum_{k = 1}^m A^{(k)} / m$, define the distance measure between nodes $i$ and $i'$ as $d^2(i, i') = \max_{k \neq i, i'} |\langle\bar{A}_{i \cdot} - \bar{A}_{i' \cdot},\bar{A}_{k \cdot}\rangle|$ and the neighborhood of node $i$ as $\cN_i = \{i' \neq i: d^2(i, i') \leq q_i(q)\}$, where $q_i(q)$ denotes the $q$th quantile of the distance set $\{d^2(i, i'): i' \neq i\}$. Then the parameter $q$ is set to be $C \log n / (n^{1 / 2} \omega)$, where $C$ is some positive constant and $\omega = \min\{ n^{1 / 2}, (m \log n)^{1 / 2}\}$. Finally, given the neighborhood $\cN_i$ for each node $i$, the link probability $P_{i j}$ between nodes $i$ and $j$ is estimated by $\tilde{P}_{ij} = \sum_{i' \in \cN_i} \bar{A}_{i' j} / |\cN_i|.$ In comparing with the neighborhood smoothing method proposed in \cite{zhang2017}, the key idea is to employ the average network information $\Bar A$ and simultaneously  shrink  the neighborhood size (from $C( \log n / n)^{1 / 2} $ to $C \log n / (n^{1 / 2} \omega)$) to obtain an estimate with an improved rate. 
	
	Based on MNBS, for the symmetric networks considered in this paper, we use symmetrized estimators of the link probability matrices $P_u, u = 1, 2$, of the two groups of graphs as
	\begin{align} \label{P estimated by MNBS}
	\hat{P}_u = \frac{\tilde{P}_u + (\tilde{P}_u)^T}{2}, ~ \text{with } \tilde{P}_{u,ij} = \frac{\sum_{i' \in \cN_{u,i}} \bar{A}_{u,i' j}}{|\cN_{u,i}|},
	\end{align}
	where $\bar{A}_{u,i' j}$ is the $(i', j)$ element of $\bar{A}_u = \sum_{k = 1}^{m_u} A_u^{(k)} / m_u$ and $\cN_{u,i}$ is the neighborhood of node $i$ in group $u$.
	
	From Lemma 9.3 in \cite{zhao2019}, we have 
	\begin{equation} \label{neighbor number}
	|\cN_{u,i}| \geq B_u \frac{n^{1 / 2} \log n}{\omega_u},
	\end{equation}
	where $B_u$ is a global positive constant and $\omega_u = \min\{ n^{1 / 2}, (m_u \log n)^{1 / 2}\}$ for $u=1,2$.
	
	For the MNBS, we do not provide an explicit rate on bounding the sup norm $\sup_{i,j} |\hat{P}_{u,i j} - P_{u, i j}| $ due to the difficulty in deriving the point-wise rate. 
	From the definition of $d^2(i, i’)$, it can be seen that the distance measure between nodes $i$ and $i’$ in the MNBS algorithm is based on the row pattern similarity instead of point-wise way. To derive an entry-wise error of $P_{ij}$, one can use Bernstein's inequality, but the neighbor of a node is selected by the $q$th quantile of the distance set, which would decrease the variance of sample in the neighbor, but this decreased variance is unknown.
	The extensive simulation carried out in Section \ref{sec-simu} and Section \ref{sec-data} show that MNBS-based tests often yield the best performance in comparing with tests based on other estimators.

	\begin{remark}
		As one can see in our setup in Section \ref{sec:setup}, it is assumed that the edges of each network $A^{(k)}$, $k=1,\ldots, m$, in the same populations are generated \emph{independently} from the same deterministic link probability matrix $P$. To fit this setup under a  genuine `graphon model', one has to assume that for each node $i$, the latent variable $\xi_i$ is the same over all the samples in the same population and will not change for each sample. That is, a global latent sequence $\{\xi_i\}_{i=1}^n$ is shared across all the networks. Note that this setup does not fall under a genuine graphon model in which one first samples uniform random sequence $\{\xi^{(k)}_i\}_{i=1}^n$ for each network over $k$, then 
		$A_{ij}^{(k)}\mid \{\xi^{(k)}_i\}_{i=1}^n \sim \text{Bernoulli}(f(\xi^{(k)}_i, \xi^{(k)}_j))$. Therefore, the entries of the adjacent matrix or network are not independent after marginalizing the latent variables. 
		
	\end{remark}

	\section{Change-point detection in dynamic networks}
	\label{sec-changepoint}
	We refer the two sample test based on asymptotic $TW_1$ proposed in the previous section as $TW_1$-type test. In this section, we adapt the $TW_1$-type test to  a procedure for  change-point detection in dynamic networks, which is another important learning task in statistics and  has received a great deal of recent attentions. Specifically, we examine a sequence of networks whose distributions may exhibit changes at some time epochs. Then, the problem is to determine the unknown change-points based on the observed sequence of network adjacency matrices. 
	
	Assume the observed dynamic networks $\{A_t\}_{t = 1}^m$ are generated by a sequence of probability matrices $\{P_t\}_{t = 1}^m$ with $A_{t, i j} \sim \text{Bernoulli}(P_{t, i j})$ for time $t = 1,\ldots, m$. Let $\cJ = \{\eta_j\}_{j = 1}^{J} \subset \{1,\ldots,m\}$ be a collection of change-points and $\eta_0 = 0$, $\eta_{J + 1} = m$,  ordered as $\eta_0 < \eta_1 < \dots < \eta_J < \eta_{J + 1}$, such that
	\begin{equation*}
	P_t = P^{(j)}, t = \eta_{j - 1} + 1, \ldots, \eta_j, j = 1, \ldots, J + 1.
	\end{equation*}
	In other words, the change-points $\{\eta_j\}_{j = 1}^{J}$ divide the networks into $J + 1$ groups, the networks contained in the same group follow the same link probability matrix and $P^{(j)}$ is the link probability matrix of the $j$th segment satisfying $P^{(j)} \neq P^{(j + 1)}$. Denote $\cJ = \emptyset$ if $J = 0$.
	
	Now we apply our $TW_1$-type test to a screening and thresholding algorithm that is commonly used in change-point detection, see \cite{niu2012, zou2014, zhao2019}. The detection procedure is referred as $TW_1$-type detection and  described as follows.
	
	Define $L = \min_{1 \leq j \leq J + 1}(\eta_j - \eta_{j - 1})$, which is the minimum segment length. Set a screening window size $h \ll m$ and $h < L / 2$. Denote $\bar{A}_1(t, h) = \frac{1}{h} \sum_{i = t - h + 1}^{t} A_i$ and $\bar{A}_2(t, h) = \frac{1}{h} \sum_{i = t + 1}^{t + h} A_i$ for each $t = h, \ldots, m - h$. $\hat{P}_1(t, h)$ and $\hat{P}_2(t, h)$ are  for example MNBS estimators using $\{A_i\}_{i = t - h + 1}^{t}$ and $\{A_i\}_{i = t + 1}^{t + h}$ respectively. In addition, we denote a matrix $\hat{Z}(t,h)$ with entries as follows essentially the same as in \eqref{Z_hat for two-sample}:
	
	\begin{equation*}
	\begin{split}
	\hat{Z}_{i j}(t, h) = \frac{\bar{A}_{1, i j}(t, h) -\bar{A}_{2, i j}(t, h)}{\sqrt{(n - 1) \left\{ \frac{1}{h} \hat{P}_{1, i j}(t, h) \Big[ 1 - \hat{P}_{1, i j}(t, h)\Big]  + \frac{1}{h} \hat{P}_{2, i j}(t, h) \Big[ 1 - \hat{P}_{2, i j}(t, h)\Big] \right\}  }}, \\
	i, j = 1, 2, \ldots, n.
	\end{split}
	\end{equation*}	
	In the screening step, we calculate the scan statistics $T_{TW_1} (t, h)$ depending only on observations in a small neighborhood $[t - h + 1, t + h]$ as follows:
	\begin{equation*}
	T_{TW_1}(t,h) = n^{2 / 3} \big\{\sigma_1\big[\hat{Z}(t,h)\big] - 2\big\}.
	\end{equation*}
	Define the $h$-local maximizers of $T_{TW_1}(t,h)$ as $\{t: T_{TW_1}(t,h) \geq T_{TW_1}(t',h) ~ \text{for all} ~ t' \in (t - h, t + h)\}$. Let $\cL \cM$ denote the set of all $h$-local maximizers of $T_{TW_1} (t, h)$. 
	
	In the thresholding step, we estimate the change-points by a thresholding rule to $\cL \cM$ with time $t$ such that
	\begin{equation} \label{CP time}
	\hat{\cJ} = \{t: t \in \cL \cM ~ \text{and} ~ T_{TW_1} (t, h) > \vartriangle_{T_{TW_1}}\},
	\end{equation}
	where $\vartriangle_{T_{TW_1}} = \max\{\tau_{\alpha}, n^{2 / 3} [\delta(t, h) - 4] - \tau_{\alpha}\}, \alpha = 1 / 2 - (1 - 1 / n)^{1 / (2 h)} / 2, \delta(t, h) = \sigma_1(V_1(t,h))$ is the largest singular value of matrix $V_1(t,h)$ with zero diagonal and for any $i \neq j$,
	\begin{equation*}
	\begin{split}
	V_{1, i j} (t,h) = \frac{\hat{P}_{1, i j}(t, h) - \hat{P}_{2, i j}(t, h)}{\sqrt{ (n - 1)\Big\{\frac{1}{h} \hat{P}_{1, i j}(t, h) \Big[ 1 - \hat{P}_{1, i j}(t, h)\Big] + \frac{1}{h} \hat{P}_{2, i j}(t, h) \Big[ 1 - \hat{P}_{2, i j}(t, h)\Big]\Big\} }},\\
	i, j = 1, 2, \ldots, n.
	\end{split}
	\end{equation*}
	We have the following consistency result.
	
	\begin{theorem}[Consistency of $TW_1$-type change-point detection]
		\label{consistency of change-points detection}
		Under the alternative hypothesis, assume $n^{2 / 3} [\sigma(t,h) - 4] \geq 2 \tau_{\alpha}, \alpha = 1 / 2 - (1 - 1 / n)^{1 / (2 h)} / 2, h < L / 2$, then the $TW_1$-type change-point detection procedure satisfies
		$$\lim_{n \rightarrow \infty} P\big(\cJ = \hat{\cJ} \big) = 1.$$
	\end{theorem}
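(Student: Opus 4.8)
The plan is to reduce the event $\{\cJ = \hat{\cJ}\}$ to three high-probability sub-events and control each with Theorem \ref{asymptotic null distribution result for GENERAL} and Corollary \ref{asymptotic power guarantee for GENERAL}. Classify each scan position $t$ by whether its window $[t-h+1,t+h]$ lies entirely inside one segment (a \emph{clean} window) or straddles a true change-point $\eta_j$ (a \emph{contaminated} window); since $h<L/2$, a window of length $2h$ straddles at most one change-point and each half-window meets at most two segments. I would then establish: (A) no clean $t\in\cL\cM$ is retained, i.e. $T_{TW_1}(t,h)\le\vartriangle_{T_{TW_1}}$; (B) every true $\eta_j$ satisfies $T_{TW_1}(\eta_j,h)>\vartriangle_{T_{TW_1}}$ and $\eta_j\in\cL\cM$; and (C) within the contaminated band $\{t:0<|t-\eta_j|<h\}$ the scan statistic is strictly dominated by its value at $\eta_j$, so $\eta_j$ is the only retained local maximizer near the change. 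A union bound over $j$ and over clean windows then yields $P(\cJ=\hat{\cJ})\to1$.

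For (A), a clean window places us under the null of \eqref{test for two-sample} with sample size $h$, so Theorem \ref{asymptotic null distribution result for GENERAL} gives $T_{TW_1}(t,h)\rightsquigarrow TW_1$ and hence $P(T_{TW_1}(t,h)>\tau_\alpha)\le 2\alpha+o_n(1)$, the factor $2$ accounting for $\sigma_1=\max(\lambda_1,-\lambda_n)$. Because $\vartriangle_{T_{TW_1}}\ge\tau_\alpha$ always, the calibration $\alpha=1/2-(1-1/n)^{1/(2h)}/2$, equivalently $1-(1-2\alpha)^{2h}=1/n$, is precisely what caps the familywise false-alarm probability over each clean stretch at order $1/n$; since there are only $J+1=O_n(1)$ segments, the total vanishes. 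For (B), at $t=\eta_j$ the two half-windows average i.i.d. Bernoulli matrices from $P^{(j)}$ and $P^{(j+1)}$, so $\hat{Z}(\eta_j,h)$ splits into a signal part that (up to the plug-in error) has top singular value $\sigma(\eta_j,h)$ and a generalized Wigner noise whose top singular value is $2+O_p(n^{-2/3})$ by Theorem \ref{asymptotic null distribution result for GENERAL}. Weyl's inequality for singular values then gives $\sigma_1(\hat Z(\eta_j,h))\ge\sigma(\eta_j,h)-2-O_p(n^{-2/3})$, so $T_{TW_1}(\eta_j,h)\ge n^{2/3}[\sigma(\eta_j,h)-4]-O_p(1)\ge 2\tau_\alpha-O_p(1)$ by the signal assumption. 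Since $\tau_\alpha\to\infty$ and $\delta(\eta_j,h)=\sigma(\eta_j,h)+o_p(n^{-2/3})$ under the plug-in error condition inherited from Theorem \ref{asymptotic null distribution result for GENERAL}, this lower bound exceeds $\vartriangle_{T_{TW_1}}=\max\{\tau_\alpha,\,n^{2/3}[\delta(\eta_j,h)-4]-\tau_\alpha\}$ with probability tending to one, in either branch of the maximum.

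The main obstacle is (C), exact localization. At $t=\eta_j+s$ with $0<|s|<h$ one half-window stays clean while the other is diluted, so the population signal contracts to $\sigma(\eta_j+s,h)\approx\frac{h-|s|}{h}\,\sigma(\eta_j,h)$, strictly unimodal with peak at $\eta_j$, and the deterministic gap is of order $n^{2/3}\sigma(\eta_j,h)\,|s|/h\gtrsim |s|\,n^{2/3}/h$. To conclude $T_{TW_1}(\eta_j,h)>T_{TW_1}(\eta_j+s,h)$ uniformly over the band, this gap must dominate the fluctuation of the \emph{difference} of the leading singular values of the two heavily overlapping spiked matrices $\hat Z(\eta_j,h)$ and $\hat Z(\eta_j+s,h)$. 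A naive Weyl bound controls each $\sigma_1$ only to $O_p(1)$, which is far too crude; the real work is a finer perturbation and concentration argument for the top singular value of a spiked generalized Wigner matrix that exploits the shared data between adjacent windows, and it is here that a regime such as $h=o(n^{2/3})$ is needed so that $n^{2/3}/h\to\infty$ swamps the residual stochastic wobble. Granting this, $\eta_j$ is with high probability the unique local maximizer in its band, off-peak contaminated points are discarded by the $\cL\cM$ restriction even if they pierce the threshold, and combining (A)--(C) through a final union bound over the finitely many segments and change-points gives $\lim_{n\to\infty}P(\cJ=\hat{\cJ})=1$.
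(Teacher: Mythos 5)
Your parts (A) and (B) are, in substance, the paper's own proof. For a non-change-point the paper writes $P(T_{TW_1}(t,h) \leq \vartriangle_{T_{TW_1}})$ as a product of $P(T_{TW_1}(t',h) \leq \vartriangle_{T_{TW_1}})$ over $t' \in (t-h,t+h)$ (treating the overlapping scan statistics as independent), lower-bounds each factor by $1-2\alpha+o_n(1)$ via the null Tracy--Widom law, and invokes exactly your calibration identity $1-(1-2\alpha)^{2h}=1/n$; for a true change-point it takes the second branch of the maximum defining the threshold, splits $\hat Z(t,h)$ into a signal matrix $B_1(t,h)$ and a noise matrix $B_2(t,h)$, applies the triangle inequality for the spectral norm, and controls the noise (the generalized Wigner matrix $V_2(t,h)$) by the $TW_1$ law --- precisely your Weyl-inequality step. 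The only cosmetic difference is that the paper states the signal condition directly on the estimated quantity $\delta(t,h)$, while you bridge the population quantity $\sigma(t,h)$ to $\delta(t,h)$ through the plug-in error; that bridging is needed and the paper glosses over it.

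The genuinely different part is your (C), and there the comparison cuts the other way: the paper has no counterpart to it. Its ``not a change-point'' case silently assumes every $t'$ in the window obeys the null $TW_1$ bound, which holds only when the entire neighborhood $(t-h,t+h)$ is clean; the contaminated band $0<|t-\eta_j|<h$ is never treated. Your diagnosis of that band is correct: there the scan statistic and the dynamic threshold inflate together, and by the same two-sided triangle-inequality computation one gets $T_{TW_1}(t,h) > \vartriangle_{T_{TW_1}}$ with high probability whenever the diluted signal activates the second branch of the threshold, so thresholding alone cannot discard such points --- only domination by $T_{TW_1}(\eta_j,h)$ within the band, i.e.\ the $\cL\cM$ restriction, can. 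Establishing that domination requires exactly the finer perturbation/concentration comparison of top singular values of overlapping windows that you sketch (with some condition ensuring $n^{2/3}/h \to \infty$), and neither you nor the paper actually carries it out. So where your proposal overlaps the paper it is the same proof; where it goes beyond, it exposes a real gap in the paper's argument rather than introducing one of its own --- but a complete proof would still have to prove your step (C) rather than grant it.
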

	
	One of the interesting findings from Theorem \ref{consistency of change-points detection} is that for a fixed window size $h$, the threshold in \eqref{CP time} is dynamic with time $t$ instead of being a constant as in \cite{zhao2019}. By adapting the $TW_1$-type test for change-point detection, we can adjust the threshold with $t$ and still enjoy consistency of the change-point detection. 
	From the proof of Theorem \ref{consistency of change-points detection}, it is reflected that for a time $t$ that does not correspond to a change-point, $T_{TW_1} (t, h) \leq \vartriangle_{T_{TW_1}}$ with probability 1, so it can control the type \rom{1} error. However, for a change-point $t$, $T_{TW_1} (t, h) > \vartriangle_{T_{TW_1}}$ with probability 1, and hence the  threshold can lead to a good performance.
	
	The only tuning parameter of $TW_1$-type change-point detection procedure is the local window size $h$, which is chosen according to applications with available information or artificially like set $h = \sqrt{m}$ as recommended in \cite{zhao2019}.

	\section{Simulation study}
	\label{sec-simu}
	In this section, we illustrate the performance of $TW_1$-type test and its application to change-point detection using several synthetic data examples. 
	
	We first define four graphons and an SBM, which are used for  two-sample test and change-point detection in the  simulation studies. The graphons are partly borrowed from \cite{zhang2017} and the SBM is from \cite{zhao2019} with 2 communities. We denote the block matrix or the probability matrix of connections between blocks as $\Lambda$. More specifically, the graphons and SBM are defined as:
	
	\hspace{0.5cm} \textbf{Graphon 1}:
	$$f(u,v) = 
	\begin{cases}
	k / (K + 1), & u,v \in ((k - 1) / K, k / K ), \\
	0.3 / (K + 1), & \text{otherwise},
	\end{cases}$$
	where $K = \lfloor \log n \rfloor, k = 1, 2, \ldots, K$.
	
	\hspace{0.5cm} \textbf{Graphon 2}: 
	\begin{align*}f(u,v) = (u^2 + v^2) / 3 \cos [1 / (u^2 + v^2)] + 0.15.
	\end{align*}
	
	\hspace{0.5cm} \textbf{Graphon 3}: 
	\begin{align*} f(u,v) = \sin[5 \pi (u + v -1) + 1]/2 + 0.5.
	\end{align*}
	
	\hspace{0.5cm} \textbf{Graphon 4}: 
	\begin{align*} f(u,v) = (u^2 + v^2)/10 \cos[1/(u^2 + v^2)] + 0.05.
	\end{align*}
	
	\hspace{0.5cm} \textbf{SBM 1}: 
	\begin{align*}\Lambda = 
	\left[
	\begin{matrix}
	0.6 + \theta_0 & 0.3 \\
	0.3 & 0.6
	\end{matrix}
	\right],
	\end{align*}
	where $\theta_0$ is a constant related to sample size $m$. The membership of the $i$th node is $M(i) = I(1 \leq i \leq \lfloor 2 n / \log n \rfloor) + 2 I(\lfloor 2 n / \log n \rfloor + 1 \leq i \leq n)$.
	
	To operationalize simulations related to MNBS, the quantile parameter $q = B_0 (\log n)^{1 / 2} / (n^{1 / 2} h^{1 / 2})$ and the threshold $\vartriangle_D = D_0 (\log n)^{1 / 2 + \delta_0} / (n^{1 / 2} h^{1 / 2})$ with tuning parameters $D_0$ and $\delta_0$ for change-point detection in \cite{zhao2019} need to be specified. In the following simulations in this section and the real data analyses in Section \ref{sec-data}, we set the related parameters $h = \sqrt{m}, B_0 = 3, \delta_0 = 0.1, D_0 = 0.25$ as recommended in \cite{zhao2019} unless otherwise indicated.
	
	\subsection{Two-sample test with simulated data}
	To examine the performance of the two-sample test \eqref{test for two-sample}, we present our results by $TW_1$-type tests based on MNBS ($TW_1$-MNBS), AVG ($TW_1$-AVG), and SBM ($TW_1$-SBM) discussed in subsection  \ref{subsection Comparision}, $\chi^2$-type test with statistics \eqref{chi2-type test}, and $N$-type test with statistics \eqref{Normal-type test}. We measure the performance in terms of the Attained Significance Level (ASL) which is the probability of observing a statistics far away from the true value under the null hypothesis, and the Attained Power (AP), the probability of correctly  rejecting the null hypothesis when the alternative hypothesis is true.  
	
	We conduct two experiments  using Graphon 1 and Graphon 2 respectively. In the first experiment, we generate two groups of networks $\{A_1^{(k)}\}_{k= 1}^{m_1}$ and $\{A_2^{(k)}\}_{k= 1}^{m_2}$.  We vary the number of nodes  $n$ growing from 100 to 1000 in a step of 100 with sample sizes $m_1 = m_2 = 30, 200$, and set significance level at $\alpha = 0.05$. $\{A_1^{(k)}\}_{k= 1}^{m_1}$ are generated from Graphon 1.
	Under the null hypothesis, $\{A_2^{(k)}\}_{k= 1}^{m_2}$ are also generated from the Graphon 1 and hence $P_1 = P_2$. Under the alternative hypothesis, randomly choose $\lfloor \log n \rfloor$-element subset $S \subset \{1, 2, \ldots, n\}$, generate $\{A_2^{(k)}\}_{k= 1}^{m_2}$ from $P_2$ by setting $P_{2, i j} = P_{1, i j} + \theta_1$ with $\theta_1 = 0.05$ for $m_1 = m_2 = 30 ~ (\theta_1 = 0.02 \text{ for } m_1 = m_2 = 200)$ if $i, j \in S$, and $\theta_1 = 0$ otherwise. Using $TW_1$-MNBS, $TW_1$-AVG, $TW_1$-SBM tests, $\chi^2$-type test and $N$-type test, we run 1000 Monte Carlo simulations for the experiment to estimate the ASLs and APs of test \eqref{test for two-sample}. 
	
	The second experiment is conducted similarly but using Graphon 2. The only difference is that 
	for a better visualization of comparisons, under the alternative hypothesis, we set $P_{2, i j} = P_{1, i j} + \theta_2$ with $\theta_2 = 0.2$ for $m_1 = m_2 = 30 ~ (\theta_2 = 0.17 \text{ for } m_1 = m_2 = 200)$ if $i, j \in S$ and $\theta_2 = 0$ otherwise. 
	The rates of rejecting the null hypothesis for these two experiments are summarized in Figures \ref{test simulation 1} and \ref{test simulation 2} respectively.
	
	\begin{figure}[htbp] 
		\centering
		\includegraphics[width = .65\textwidth]{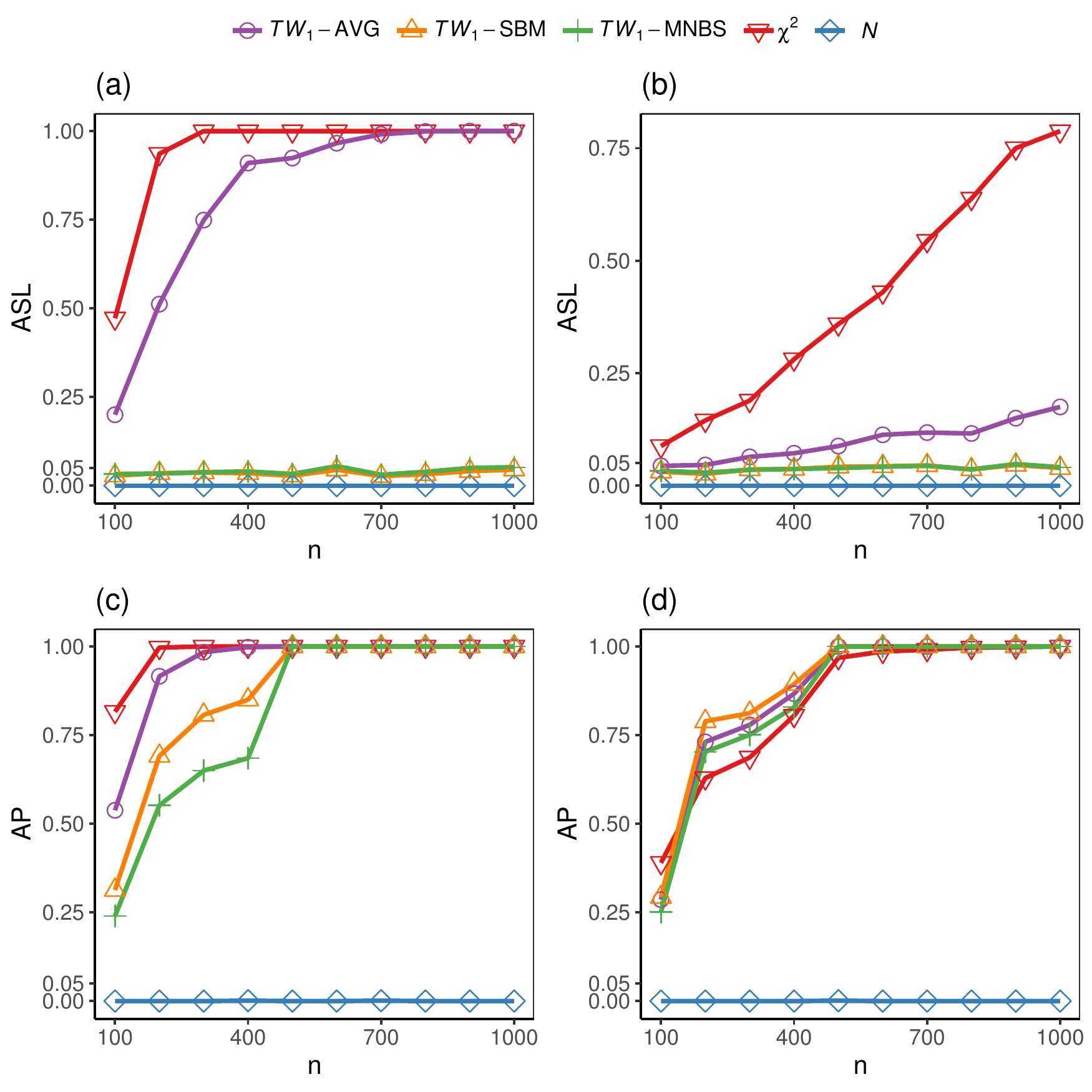} 
		\caption{ASLs and APs of tests using Graphon $1$ for different values of nodes number $n$, sample sizes $m_1$ and $m_2$. $m_1 = m_2 = 30$ for (a) and (c) and $m_1 = m_2 = 200$ for (b) and (d).} \label{test simulation 1}
	\end{figure}
	
	The results of the first experiment using Graphon 1, an SBM set up, are plotted in Figure \ref{test simulation 1}. It reveals  undesirable behaviors of $\chi^2$-type test and $TW_1$-AVG test since with increasing number of nodes $n$, the ASLs of both tests grow quickly close to $1$, which is too large to be used in practice. We can also see that the $N$-type test is not efficient  as both ASLs and APs of the test are $0$ for both cases of $m_1 = m_2 = 30, 200$. Its poor performance in APs is partly due to the small difference between $\{A_1^{(k)}\}_{k = 1}^{m_1}$ and $\{A_2^{(k)}\}_{k = 2}^{m_2}$ we set. However, the performance of $TW_1$-SBM test and $TW_1$-MNBS test are much better, ASLs of both tests are stable and close to the significance level of $\alpha = 0.05$, while APs improve to $1$ as $n$ grows. It is also found that when $n$ is not that large, $TW_1$-SBM test is slightly more powerful in terms of AP than $TW_1$-MNBS test. This is not surprising because the networks generated from Graphon $1$ are endowed with an SBM structure. 
	
	\begin{figure}[htbp] 
		\centering
		\includegraphics[width = .65\textwidth]{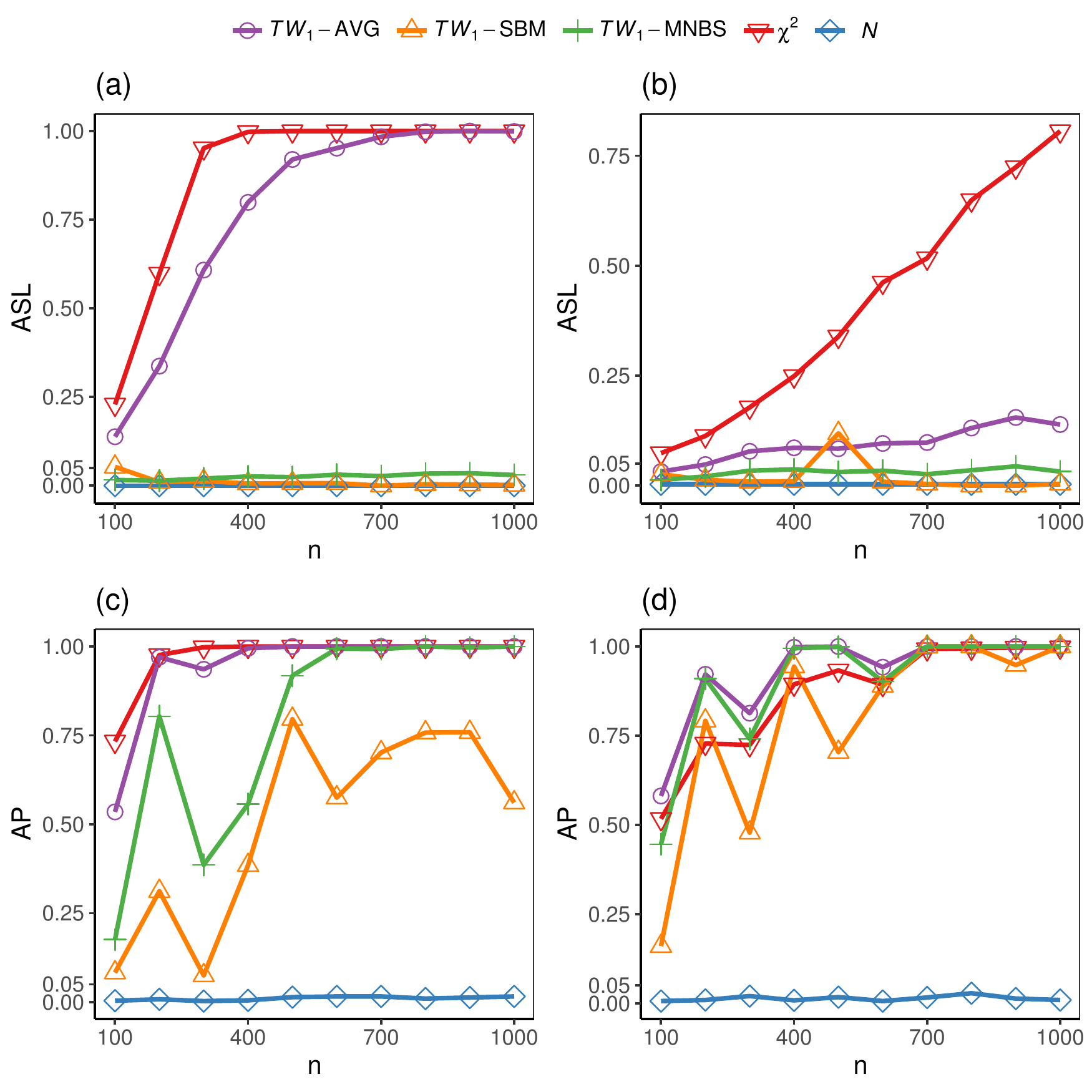} 
		\caption{ASLs and APs of tests using Graphon $2$ for different values of nodes number $n$, sample sizes $m_1$ and $m_2$. $m_1 = m_2 = 30$ for (a) and (c) and  $m_1 = m_2 = 200$ for (b) and (d).} \label{test simulation 2}
	\end{figure}
	
	The results of the second experiment using Graphon 2, which is not an SBM, are given in Figure \ref{test simulation 2}. It indicates that the behaviors of $TW_1$-AVG test, $\chi^2$-type test and $N$-type test are similar to those in the first experiment using Graphon 1 and the performance is poor. On the other hand, $TW_1$-MNBS test has a superior performance than $TW_1$-SBM in both ASL and AP. Specifically, ASLs of $TW_1$-SBM test are away from $0.05$, whereas $TW_1$-MNBS test still performs well on both ASL and AP. Moreover, this also indicates that $TW_1$-SBM test is sensitive to the network structure especially deviation from an SBM. Hence, $TW_1$-MNBS test is more robust to the network structure whereas $TW_1$-SBM test is preferable for SBM networks.
	
	\subsection{Change-point detection in dynamic networks}
	
	To assess the performance of $TW_1$-type change-point detection in dynamic networks, we compare its performance based on MNBS, AVG, and SBM estimators (referred as CP-TWMNBS, CP-TWAVG, CP-TWSBM respectively) to the graph-based nonparametric testing procedure in \cite{chen2015} referred as CP-GRA detection, and the MNBS-based change-point detect procedure in \cite{zhao2019} referred as CP-DMNBS detection. 
	
	Specifically, using all the above five methods, we conduct change-point detection experiments  under three different scenarios with zero, one, and three change-points respectively. For all the experiments, we vary the nodes number and the sample size at $n = 100, 200, 300, m = 100, 200$, and set the significance $\alpha = 0.05$. For each combination of the sample size, nodes number, and the network model, we run 100 Monte Carlo trials. Simultaneously, we also  explore the effect of network sparsity on the performance of change-point detection. For this, we consider the above setting, but scale the link probability $P$ as $\rho P$ by a factor $\rho = {1, 0.25}$, where $\rho = 1$ is exactly the same as the above setting while $\rho = 0.25$ corresponds to sparser graphs. 
	
	\subsubsection{No change-point detection}
	To study the performance with respect to false positives, we simulate two kinds of dynamic networks $\{A_t\}_{t = 1}^m$ with no change-point from Graphon 3 and SMB 1 with $\theta_0 = 0$ respectively. Tables \ref{tab 0-1} and \ref{tab 0-2} report the average number of estimated change-points by using the five methods .
	
	As one can see, the performance of CP-TWSBM, CP-TWMNBS, and CP-GRA detections perform reasonably well and improves as $n$ increases. CP-TWAVG detect method performs well in the  case of Graphon 3 while experiences heavy inflated levels in the  case of SBM 1. As for CP-DMNBS detection, the empirical type \rom{1} error is completely controlled at the target level 0.05 for  SBM 1, but there are some false positives in the case of Graphon 3.
	
	\begin{table}[H]
		\centering
		\caption{Average estimated change-points number $\hat{J}$ under no change-point scenarios through Graphon 3.}
		{\scalebox{0.75}{
				\begin{tabular}{*{8}{c}}
					\toprule			
					$m$ & $n$ & $\rho$ & CP-TWAVG & CP-TWSBM & CP-TWMNBS & CP-GRA & CP-DMNBS \\
					\midrule
					\multirow{2}{*}{100} & \multirow{2}{*}{100} & 1 &
					0.00	&	0.00	&	0.00	&	0.04	&	3.75	\\								
					& & $0.25$ &
					0.00	&	0.00	&	0.00	&	0.03	&	2.04	\\						
					\midrule
					\multirow{2}{*}{100} & \multirow{2}{*}{200} & 1 &
					0.00	&	0.00	&	0.00	&	0.11	&	0.15	\\							
					& & $0.25$ &
					0.00	&	0.00	&	0.00	&	0.07	&	0.3	\\		    			
					\midrule
					\multirow{2}{*}{100} & \multirow{2}{*}{300} & 1 &
					0.00	&	0.00	&	0.00	&	0.04	&	0.02	\\				
					& & $0.25$ &
					0.00	&	0.00	&	0.00	&	0.04	&	0.02	\\				
					\midrule
					\multirow{2}{*}{200} & \multirow{2}{*}{100} & 1 &
					0.00	&	0.00	&	0.00	&	0.02	&	2.02	\\				
					& & $0.25$ &
					0.00	&	0.00	&	0.00	&	0.08	&	5.16	\\				
					\midrule
					\multirow{2}{*}{200} & \multirow{2}{*}{200} & 1 &
					0.00	&	0.00	&	0.00	&	0.08	&	0.21	\\				
					& & $0.25$ &
					0.00	&	0.00	&	0.00	&	0.03	&	0.35	\\				
					\midrule
					\multirow{2}{*}{200} & \multirow{2}{*}{300} & 1 &
					0.00	&	0.00	&	0.00	&	0.05	&	0.01	\\				
					& & $0.25$ &
					0.00	&	0.00	&	0.00	&	0.08	&	0.01	\\				
					\bottomrule
		\end{tabular}} }\label{tab 0-1}
	\end{table}

	\begin{table}[htbp]
		\centering
		\caption{Average estimated change-points number $\hat{J}$ under no change-point scenarios through SBM 1.}
		{\scalebox{0.75}{
				\begin{tabular}{*{8}{c}}
					\toprule			
					$m$ & $n$ & $\rho$ & CP-TWAVG & CP-TWSBM & CP-TWMNBS & CP-GRA & CP-DMNBS \\
					\midrule
					\multirow{2}{*}{100} & \multirow{2}{*}{100} & 1 &
					4.19	&	0.02	&	0.02	&	0.10	&	0.00	\\				
					& & $0.25$ &
					0.06	&	0.02	&	0.02	&	0.02	&	0.00	\\				
					\midrule
					\multirow{2}{*}{100} & \multirow{2}{*}{200} & 1 &
					4.21	&	0.00	&	0.00	&	0.04	&	0.00	\\				
					& & $0.25$ &
					0.16	&	0.00	&	0.03	&	0.03	&	0.00	\\				
					\midrule
					\multirow{2}{*}{100} & \multirow{2}{*}{300} & 1 &
					4.09	&	0.01	&	0.01	&	0.04	&	0.00	\\				
					& & $0.25$ &
					0.17	&	0.02	&	0.03	&	0.01	&	0.00	\\				
					\midrule
					\multirow{2}{*}{200} & \multirow{2}{*}{100} & 1 &
					5.74	&	0.02	&	0.02	&	0.08	&	0.00	\\				
					& & $0.25$ &
					0.42	&	0.02	&	0.07	&	0.04	&	0.00	\\				
					\midrule
					\multirow{2}{*}{200} & \multirow{2}{*}{200} & 1 &
					6.34	&	0.03	&	0.03	&	0.07	&	0.00	\\				
					& & $0.25$ &
					1.13	&	0.02	&	0.02	&	0.01	&	0.00	\\				
					\midrule
					\multirow{2}{*}{200} & \multirow{2}{*}{300} & 1 &
					6.26	&	0.02	&	0.02	&	0.03	&	0.00	\\				
					& & $0.25$ &
					2.30	&	0.01	&	0.01	&	0.02	&	0.00	\\				
					\bottomrule
		\end{tabular}} }\label{tab 0-2}
	\end{table}

	\subsubsection{Single change-point detection}
	We now assess the accuracy of our proposed $TW_1$-type change-point estimators in different scenarios. The dynamic networks $\{A_t\}_{t = 1}^m$ are designed as follows. For $t = 1, 2, \ldots, m / 2$, $A_t$ is generated from link probability matrix $P_1$ by SBM 1 with $\theta_0 = 0$. For $t = m / 2 + 1, \ldots, m$, $A_t$ is generated from $P_2$ by SBM 1 with $\theta_0 = - m^{- 1/4}$. 
	
	We adopt Boysen distance suggested in \cite{boysen2009} as a measurement in the change-point estimation. Specifically, calculate the distances between the estimated change-point set $\hat{\cJ}$ and the true change-point set $\cJ$ as $\varepsilon(\hat{\cJ} \| \cJ) = \max_{b \in \cJ} \min_{a \in \hat{\cJ}} |a - b|$ and $\varepsilon(\cJ \|\hat{\cJ}) = \max_{b \in \hat{\cJ}} \min_{a \in \cJ} |a - b|$.
	
	Utilizing CP-TWMNBS, CP-TWAVG, CP-TWSBM, CP-GRA, and CP-DMNBS detections, we estimate the efficient detect rate (the rate at least one change-point is detected over 100 simulations), the average change-point number over the efficient detections, and the average Boysen distances over the efficient detections. The corresponding results are listed in Tables \ref{tab 1-1}--\ref{tab 1-3}. 
	
	\begin{table}[p]
		\centering
		\caption{Average estimated change-points number $\hat{J}$ under single change-point scenarios through SBM 1.}
		{\scalebox{0.75}{
				\begin{tabular}{*{8}{c}}
					\toprule		
					$m$ & $n$ & $\rho$ & CP-TWAVG & CP-TWSBM & CP-TWMNBS & CP-GRA & CP-DMNBS \\
					\midrule
					\multirow{2}{*}{100} & \multirow{2}{*}{100} & 1 &				
					3.43	&	1.01	&	1.01	&	1.07	&	1.00	\\								
					& & $0.25$ &
					1.02	&	0.95	&	0.98	&	0.00	&	0.00	\\				
					\midrule
					\multirow{2}{*}{100} & \multirow{2}{*}{200} & 1 &
					3.50	&	1.01	&	1.01	&	1.09	&	1.00	\\								
					& & $0.25$ &
					1.04	&	1.00	&	1.01	&	0.00	&	0.00	\\				
					\midrule
					\multirow{2}{*}{100} & \multirow{2}{*}{300} & 1 &
					3.49	&	1.01	&	1.01	&	1.11	&	1.00	\\								
					& & $0.25$ &
					1.04	&	1.00	&	1.01	&	0.00	&	0.00	\\				
					\midrule
					\multirow{2}{*}{200} & \multirow{2}{*}{100} & 1 &
					5.09	&	1.00	&	1.00	&	1.06	&	1.00	\\				
					& & $0.25$ &
					1.05	&	0.66	&	0.73	&	0.04	&	0.00	\\				
					\midrule
					\multirow{2}{*}{200} & \multirow{2}{*}{200} & 1 &
					5.52	&	1.03	&	1.03	&	1.11	&	1.00	\\				
					& & $0.25$ &
					1.51	&	1.00	&	1.01	&	0.00	&	0.00	\\					
					\midrule
					\multirow{2}{*}{200} & \multirow{2}{*}{300} & 1 &
					5.34	&	1.00	&	1.01	&	1.06	&	1.00	\\				
					& & $0.25$ &
					2.19	&	1.00	&	1.00	&	0.00	&	0.00	\\				
					\bottomrule
		\end{tabular}} }\label{tab 1-1}
	\end{table}
	
	\begin{table}[htbp]
		\centering
		\caption{Average efficient detect rate under single change-point scenarios through SBM 1.}
		{\scalebox{0.75}{
				\begin{tabular}{*{8}{c}}
					\toprule			
					$m$ & $n$ & $\rho$ & CP-TWAVG & CP-TWSBM & CP-TWMNBS & CP-GRA & CP-DMNBS \\
					\midrule	
					\multirow{2}{*}{100} & \multirow{2}{*}{100} & 1 &
					1.00	&	1.00	&	1.00	&	1.00	&	1.00	\\				
					& & $0.25$ &
					0.97	&	0.94	&	0.95	&	0.00	&	0.00	\\				
					\midrule
					\multirow{2}{*}{100} & \multirow{2}{*}{200} & 1 &
					1.00	&	1.00	&	1.00	&	1.00	&	1.00	\\				
					& & $0.25$ &
					1.00	&	1.00	&	1.00	&	0.00	&	0.00	\\				
					\midrule
					\multirow{2}{*}{100} & \multirow{2}{*}{300} & 1 &
					1.00	&	1.00	&	1.00	&	1.00	&	1.00	\\				
					& & $0.25$ &
					1.00	&	1.00	&	1.00	&	0.00	&	0.00	\\				
					\midrule
					\multirow{2}{*}{200} & \multirow{2}{*}{100} & 1 &
					1.00	&	1.00	&	1.00	&	1.00	&	1.00	\\				
					& & $0.25$ &
					0.88	&	0.65	&	0.70	&	0.03	&	0.00	\\				
					\midrule
					\multirow{2}{*}{200} & \multirow{2}{*}{200} & 1 &
					1.00	&	1.00	&	1.00	&	1.00	&	1.00	\\				
					& & $0.25$ &
					1.00	&	1.00	&	1.00	&	0.00	&	0.00	\\				
					\midrule
					\multirow{2}{*}{200} & \multirow{2}{*}{300} & 1 &
					1.00	&	1.00	&	1.00	&	1.00	&	1.00	\\				
					& & $0.25$ &
					1.00	&	1.00	&	1.00	&	0.00	&	0.00	\\				
					\bottomrule
		\end{tabular}}} \label{tab 1-2}
	\end{table}
	
	\begin{table}[htbp]
		\centering
		\caption{Average Boysen distances $\varepsilon_1, \varepsilon_2$ under single change-point scenarios through SBM 1.}
		{\scalebox{0.75}{
				\begin{tabular}{*{9}{c}}
					\toprule		
					$m$ & $n$ & $\rho$ & & CP-TWAVG & CP-TWSBM & CP-TWMNBS & CP-GRA & CP-DMNBS \\
					\midrule
					\multirow{4}{*}{100} & \multirow{4}{*}{100} & \multirow{2}{*}{1} & $\varepsilon_1$ &
					35.16	&	0.39	&	0.39	&	1.53	&	0.03	\\				
					& & & $\varepsilon_2$ &
					0.00	&	0.00	&	0.00	&	0.00	&	0.03	\\							
					\cmidrule{3-9}
					& & \multirow{2}{*}{0.25} & $\varepsilon_1$ &
					1.62	&	0.39	&	0.92	&	-	&	-	\\								
					& & & $\varepsilon_2$ &
					0.13	&	0.12	&	0.12	&	-	&	-	\\						
					\midrule
					\multirow{4}{*}{100} & \multirow{4}{*}{200} & \multirow{2}{*}{1} & $\varepsilon_1$ &
					36.09	&	0.39	&	0.39	&	1.80	&	0.00	\\							
					& & & $\varepsilon_2$ &
					0.00	&	0.00	&	0.00	&	0.00	&	0.00	\\							
					\cmidrule{3-9}
					& & \multirow{2}{*}{0.25} & $\varepsilon_1$ &
					0.91	&	0.00	&	0.19	&	-	&	-	\\							
					& & & $\varepsilon_2$ &
					0.00	&	0.00	&	0.00	&	-	&	-	\\							
					\midrule
					\multirow{4}{*}{100} & \multirow{4}{*}{300} & \multirow{2}{*}{1} & $\varepsilon_1$ &
					35.83	&	0.31	&	0.31	&	2.18	&	0.00	\\							
					& & & $\varepsilon_2$ &	
					0.00	&	0.00	&	0.00	&	0.00	&	0.00	\\								
					\cmidrule{3-9}
					& & \multirow{2}{*}{0.25} & $\varepsilon_1$ &
					0.93	&	0.00	&	0.36	&	-	&	-	\\								
					& & & $\varepsilon_2$ &
					0.00	&	0.00	&	0.00	&	-	&	-	\\						
					\midrule
					\multirow{4}{*}{200} & \multirow{4}{*}{100} & \multirow{2}{*}{1} & $\varepsilon_1$ &
					78.46	&	0.00	&	0.00	&	3.22	&	0.08	\\							
					& & & $\varepsilon_2$ &
					0.00	&	0.00	&	0.00	&	0.06	&	0.08	\\							
					\cmidrule{3-9}
					& & \multirow{2}{*}{0.25} & $\varepsilon_1$ &
					11.51	&	1.35	&	3.60	&	38.00	&	-	\\						
					& & & $\varepsilon_2$ &
					1.34	&	0.34	&	1.31	&	27.33	&	-	\\						
					\midrule
					\multirow{4}{*}{200} & \multirow{4}{*}{200} & \multirow{2}{*}{1} & $\varepsilon_1$ &
					80.86	&	1.60	&	1.60	&	3.88	&	0.00	\\							
					& & & $\varepsilon_2$ &
					0.00	&	0.00	&	0.00	&	0.00	&	0.00	\\						
					\cmidrule{3-9}
					& & \multirow{2}{*}{0.25} & $\varepsilon_1$ &
					23.31	&	0.01	&	0.41	&	-	&	-	\\								
					& & & $\varepsilon_2$ &
					0.01	&	0.01	&	0.01	&	-	&	-	\\						
					\midrule
					\multirow{4}{*}{200} & \multirow{4}{*}{300} & \multirow{2}{*}{1} & $\varepsilon_1$ &
					78.86	&	0.00	&	0.45	&	2.78	&	0.00	\\								
					& & & $\varepsilon_2$ &
					0.00	&	0.00	&	0.00	&	0.00	&	0.00	\\							
					\cmidrule{3-9}
					& & \multirow{2}{*}{0.25} & $\varepsilon_1$ &
					48.20	&	0.00	&	0.00	&	-	&	-	\\							
					& & & $\varepsilon_2$ &
					0.00	&	0.00	&	0.00	&	-	&	-	\\							
					\bottomrule
					\multicolumn{9}{c}{\footnotesize Note: the dash ``-'' means there is no change-points detected.}
		\end{tabular}}} \label{tab 1-3}
	\end{table}
	
	Results provided in Tables \ref{tab 1-1}--\ref{tab 1-3} show that CP-TWSBM and CP-TWMNBS detections yield reliable estimates of the number of change-points and their locations. When $\rho = 1$, CP-TWAVG over-estimates the number of change-points, but it's interesting that it performs well for sparser case of $\rho = 0.25$. A possible explanation is that the sparser structure overcomes its inflated behavior to some extent. As for CP-GRA and CP-DMNBS detections, the performances of both methods are  reasonable in dense scenarios, especially CP-DMNBS. However, they are unable to detect any change-point for the sparser setting $\rho = 0.25$ in this example.
	
	\subsubsection{Three change-points detection}
	To assess the robustness of our method for change-point detection, we further construct a model with three change-points in the networks.
	We first design three types of link probability matrix changes, which we use to build dynamic networks later.  Given a link probability matrix $P$, define a changed link probability matrix initialized as $P' = P$. For two given sets $\cM_1, \cM_2 \subset \{1,2,\ldots,n\}$, for any $ i \in \cM_1$ and $j \in \cM_2$, the different types of link probability matrix changes are defined as follows:
	
	\begin{enumerate}[(1)]
		\item Coummunity switching: $P'_{i,\cdot} = P_{j,\cdot}, P'_{\cdot, i} = P_{\cdot, j}, P'_{j,\cdot} = P_{i,\cdot}, P'_{\cdot, j} = P_{\cdot, i}$.
		\item Community merging: $P'_{i,\cdot} = P_{j,\cdot}, P'_{\cdot, i} = P_{\cdot, j}$.
		\item Community changing: Regenerate $P'_{i, j}$ from Graphon 4.		
	\end{enumerate}
	
	Then the dynamic networks $\{A_t\}_{t = 1}^m$ for multiple change-points are designed as follows. $\cM_1$ and $\cM_2$ are two sets with $\lfloor n / 3\rfloor$ nodes randomly chosen from $\{1,2,\ldots,n\}$. For $t = 1, 2, \ldots, m / 4$, $A_t$ is generated from $P_1$ by Graphon 2. For $t = m / 4 + 1, \ldots, m / 2$, $A_t$ is generated from $P_2$ changed from $P_1$ by community switching. For $t = m / 2 + 1, \ldots, 3 m / 4$, $A_t$ is generated from $P_3$ changed from $P_2$ by community merging. For $t = 3 m / 4 + 1, \ldots, m$,  $A_t$ is generated from $P_4$ changed from $P_3$ by community changing. 
	The results are illustrated in Tables \ref{tab 3-1}--\ref{tab 3-3}.
	
	The reports suggest that CP-TWMNBS performs the best in terms of the number, efficiency and accuracy of change-point estimation. CP-TWSBM enjoys reasonably good behavior when $m = 100$ while encounters some false positives when $m $ increases to $200$. As for CP-TWAVG, although the estimated change-points number $\hat{\cJ}$ in Table \ref{tab 3-1} are not far away from real value $3$ and the efficient detect rates in Table \ref{tab 3-2} are all equal to $1$, the Boysen distances in Table \ref{tab 3-3} are sometimes too large to be accepted, i.e., the location error can not be controlled stably.

	\begin{table}[p]
		\centering
		\caption{Average estimated change-points number $\hat{J}$ under three change-points scenarios.}
		{\scalebox{0.75}{
				\begin{tabular}{*{8}{c}}
					\toprule		
					$m$ & $n$ & $\rho$ & CP-TWAVG & CP-TWSBM & CP-TWMNBS & CP-GRA & CP-DMNBS \\
					\midrule
					\multirow{2}{*}{100} & \multirow{2}{*}{100} & 1 &				
					3.00	&	3.00	&	3.00	&	0.31	&	3.00 \\												
					& & $0.25$ &
					1.92	&	3.40	&	2.99	&	0.00	&	0.02 \\								
					\midrule
					\multirow{2}{*}{100} & \multirow{2}{*}{200} & 1 &
					3.00	&	3.00	&	3.00	&	2.10	&	3.00 \\												
					& & $0.25$ &
					3.00	&	3.03	&	3.00	&	0.00	&	0.15 \\								
					\midrule
					\multirow{2}{*}{100} & \multirow{2}{*}{300} & 1 &
					3.00	&	3.00	&	3.00	&	0.00	&	3.00 \\												
					& & $0.25$ &
					3.00	&	3.00	&	3.00	&	0.00	&	1.52 \\							
					\midrule
					\multirow{2}{*}{200} & \multirow{2}{*}{100} & 1 &
					3.16	&	3.00	&	3.00	&	2.29	&	3.02 \\								
					& & $0.25$ &
					2.20	&	5.35	&	2.96	&	0.00	&	0.07 \\							
					\midrule
					\multirow{2}{*}{200} & \multirow{2}{*}{200} & 1 &
					3.37	&	3.00	&	3.00	&	1.06	&	3.01 \\							
					& & $0.25$ &
					3.00	&	4.57	&	3.01	&	0.00	&	1.62 \\									
					\midrule
					\multirow{2}{*}{200} & \multirow{2}{*}{300} & 1 &
					3.57	&	3.01	&	3.00	&	0.11	&	3.00 \\								
					& & $0.25$ &
					3.00	&	4.55	&	3.00	&	0.00	&	1.95 \\								
					\bottomrule
		\end{tabular}} }\label{tab 3-1}
	\end{table}

	\begin{table}[htbp]
		\centering
		\caption{Average efficient detect rate under three change-points scenarios.}
		{\scalebox{0.75}{
				\begin{tabular}{*{8}{c}}
					\toprule			
					$m$ & $n$ & $\rho$ & CP-TWAVG & CP-TWSBM & CP-TWMNBS & CP-GRA & CP-DMNBS \\
					\midrule	
					\multirow{2}{*}{100} & \multirow{2}{*}{100} & 1 &
					1.00	&	1.00	&	1.00	&	0.13	&	1.00	\\								
					& & $0.25$ &
					1.00	&	1.00	&	1.00	&	0.00	&	0.02	\\								
					\midrule
					\multirow{2}{*}{100} & \multirow{2}{*}{200} & 1 &
					1.00	&	1.00	&	1.00	&	1.00	&	1.00	\\								
					& & $0.25$ &
					1.00	&	1.00	&	1.00	&	0.00	&	0.15	\\								
					\midrule
					\multirow{2}{*}{100} & \multirow{2}{*}{300} & 1 &
					1.00	&	1.00	&	1.00	&	0.00	&	1.00	\\								
					& & $0.25$ &
					1.00	&	1.00	&	1.00	&	0.00	&	0.96	\\								
					\midrule
					\multirow{2}{*}{200} & \multirow{2}{*}{100} & 1 &
					1.00	&	1.00	&	1.00	&	0.81	&	1.00	\\								
					& & $0.25$ &
					1.00	&	1.00	&	1.00	&	0.00	&	0.07	\\								
					\midrule
					\multirow{2}{*}{200} & \multirow{2}{*}{200} & 1 &
					1.00	&	1.00	&	1.00	&	0.39	&	1.00	\\								
					& & $0.25$ &
					1.00	&	1.00	&	1.00	&	0.00	&	0.98	\\								
					\midrule
					\multirow{2}{*}{200} & \multirow{2}{*}{300} & 1 &
					1.00	&	1.00	&	1.00	&	0.06	&	1.00	\\							
					& & $0.25$ &
					1.00	&	1.00	&	1.00	&	0.00	&	1.00	\\								
					\bottomrule
		\end{tabular}}} \label{tab 3-2}
	\end{table}
	
	\begin{table}[htbp]
		\centering
		\caption{Average Boysen distances $\varepsilon_1, \varepsilon_2$ under three change-points scenarios.}
		{\scalebox{0.75}{
				\begin{tabular}{*{9}{c}}
					\toprule		
					$m$ & $n$ & $\rho$ & & CP-TWAVG & CP-TWSBM & CP-TWMNBS & CP-GRA & CP-DMNBS \\
					\midrule
					\multirow{4}{*}{100} & \multirow{4}{*}{100} & \multirow{2}{*}{1} & $\varepsilon_1$ &
					0.00	&	0.01	&	0.00	&	8.69	&	0.02	\\								
					& & & $\varepsilon_2$ &
					0.00	&	0.01	&	0.00	&	34.31	&	0.02	\\											
					\cmidrule{3-9}
					& & \multirow{2}{*}{0.25} & $\varepsilon_1$ &
					0.06	&	5.50	&	0.15	&	-	&	0.00	\\												
					& & & $\varepsilon_2$ &
					26.99	&	0.39	&	0.40	&	-	&	50.00	\\										
					\midrule
					\multirow{4}{*}{100} & \multirow{4}{*}{200} & \multirow{2}{*}{1} & $\varepsilon_1$ &
					0.00	&	0.00	&	0.00	&	1.17	&	0.00	\\											
					& & & $\varepsilon_2$ &
					0.00	&	0.00	&	0.00	&	25.10	&	0.00	\\											
					\cmidrule{3-9}
					& & \multirow{2}{*}{0.25} & $\varepsilon_1$ &
					0.00	&	0.45	&	0.00	&	-	&	0.07	\\											
					& & & $\varepsilon_2$ &
					0.00	&	0.00	&	0.00	&	-	&	50.07	\\									
					\midrule
					\multirow{4}{*}{100} & \multirow{4}{*}{300} & \multirow{2}{*}{1} & $\varepsilon_1$ &
					0.00	&	0.00	&	0.00	&	-	&	0.00	\\										
					& & & $\varepsilon_2$ &	
					0.00	&	0.00	&	0.00	&	-	&	0.00	\\											
					\cmidrule{3-9}
					& & \multirow{2}{*}{0.25} & $\varepsilon_1$ &
					0.00	&	0.00	&	0.00	&	-	&	0.03	\\												
					& & & $\varepsilon_2$ &
					0.00	&	0.00	&	0.00	&	-	&	34.90	\\									
					\midrule
					\multirow{4}{*}{200} & \multirow{4}{*}{100} & \multirow{2}{*}{1} & $\varepsilon_1$ &
					4.70	&	0.00	&	0.00	&	14.72	&	0.69	\\											
					& & & $\varepsilon_2$ &
					0.00	&	0.00	&	0.00	&	53.96	&	0.16	\\										
					\cmidrule{3-9}
					& & \multirow{2}{*}{0.25} & $\varepsilon_1$ &
					0.08	&	29.51	&	0.67	&	-	&	0.43	\\										
					& & & $\varepsilon_2$ &
					40.01	&	0.51	&	3.30	&	-	&	79.00	\\									
					\midrule
					\multirow{4}{*}{200} & \multirow{4}{*}{200} & \multirow{2}{*}{1} & $\varepsilon_1$ &
					11.27	&	0.00	&	0.00	&	15.72	&	0.35	\\											
					& & & $\varepsilon_2$ &
					0.00	&	0.00	&	0.00	&	61.03	&	0.04	\\										
					\cmidrule{3-9}
					& & \multirow{2}{*}{0.25} & $\varepsilon_1$ &
					0.00	&	27.64	&	0.30	&	-	&	0.04	\\												
					& & & $\varepsilon_2$ &
					0.00	&	0.03	&	0.00	&	-	&	65.83	\\									
					\midrule
					\multirow{4}{*}{200} & \multirow{4}{*}{300} & \multirow{2}{*}{1} & $\varepsilon_1$ &
					17.05	&	0.33	&	0.00	&	11.17	&	0.00	\\											
					& & & $\varepsilon_2$ &
					0.00	&	0.00	&	0.00	&	89.33	&	0.00	\\										
					\cmidrule{3-9}
					& & \multirow{2}{*}{0.25} & $\varepsilon_1$ &
					0.00	&	27.85	&	0.00	&	-	&	0.07	\\										
					& & & $\varepsilon_2$ &
					0.00	&	0.02	&	0.00	&	-	&	52.47	\\									
					\bottomrule
					\multicolumn{9}{c}{\footnotesize Note: the dash ``-'' means there is no change-points detected.}
		\end{tabular}}} \label{tab 3-3}
	\end{table}
	
	On the other hand, CP-GRA detection suffers greatly under-estimating the change-points, especially when $\rho =0.25$, there is no change-point detected in all cases. It happens similarly to CP-DMNBS detection when $\rho = 0.25$, so CP-DMNBS is also not the ideal for this scenario even though it is powerful when the networks are dense.
	
	Overall, the numerical experiments clearly demonstrate the superior performance of CP-TWMNBS detection over other detect methods for all simulation scenarios with CP-TWSBM method coming in second. CP-TWMNBS detection provides robust and stable performance across all experiments with more accurate $\hat{\cJ}$, higher efficient detection and smaller Boysen distances.

	\section{Data analysis}
	\label{sec-data}
	In this section, we analyze the performance of the proposed $TW_1$-type method for two-sample test and $TW_1$-type change-point detection using two real datasets. The first dataset used for the two-sample test comes from the Centers of Biomedical Research Excellence (COBRE) and the second dataset used for change-point detection is from MIT Reality Mining (RM) \citep{eagle2009}.

	\subsection{Two-sample test with real data example}
	Raw anatomical and functional scans from 146 subjects of 72 patients with schizophrenia (SCZ) and 74 healthy controls (HCs) can be downloaded from a public database (\url{http://fcon_1000.projects.nitrc.org/indi/retro/cobre.html}). In this paper, we use the processed connectomics dataset in \cite{arroyo2017}. After a series of pre-processing steps, \cite{arroyo2017} keeps 54 SCZ and 70 HC subjects for analysis and chooses 264 brain regions of interest as the nodes. For each of the 263 nodes with every other node, they applies Fisher's R-to-Z transformation to the cross-correlation matrix of Pearson $r$-values. 
	
	In our study, we perform the Z-to-R inverse transformation to their dataset to get the original cross-correlation matrix of Pearson $r$-values, which is denoted as $R$. To analyze graphical properties of these brain functional networks, we need to create an adjacency matrix $A$ from $R$. We set $A_{i j}$ to be $1$ if $R_{i j}$ exceeds a threshold $T$ and $A_{i j}$ to be $0$ otherwise. There is no generally accepted way to identify an optimal threshold for this graph construction procedure, we decide to set $T$ varied between $0.3$ and $0.7$ with step of $0.05$.
	
	For each threshold $T$, two situations are considered for the two-sample test. In the first situation, we randomly divide HC into $2$ groups with sample sizes $m_1 = m_2 = 35$ and calculate the average null hypothesis reject rates of $TW_1$-MNBS test, $TW_1$-AVG test, $TW_1$-SBM test, $\chi^2$-type test, and $N$-type test through 100 repeated simulations. In the second situation, we apply the same test methods above to two groups of SCZ and HC directly and compare their average null hypothesis reject rates. In both cases, the significance level is set to be 0.05. The results are shown in Tables \ref{tab p-value} and \ref{tab power} respectively.
	
	\begin{table}[htbp]
		\centering
		\caption{Average $H_0$ reject rate of test over HC group over 100 simulations.}
		{\scalebox{0.75}{\begin{tabular}{*{10}{c}}
					\toprule			    
					$T$ & $0.30$ & $0.35$ & $0.40$ & $0.45$ & $0.50$ & $0.55$ & $0.60$ & $0.65$ & $0.70$ \\
					\midrule
					$TW_1$-AVG & 
					$1$ & $1$ & $1$ & $1$ & $0.80$ & $0.67$ & $0.57$ & $0.50$ & $0.44$ \\
					$TW_1$-SBM & 
					$1$ & $1$ & $1$ & $1$ & $1$ & $1$ & $1$ & $1$ & $1$ \\
					$TW_1$-MNBS & 
					$1$ & $1$ & $1$ & $1$ & $1$ & $0$ & $0$ & $1$ & $1$ \\
					$\chi^2$-type & 
					$1$ & $1$ & $0$ & $0$ & $0$ & $0$ & $0$ & $0$ & $0$ \\
					$N$-type & 
					$1$ & $1$ & $1$ & $0$ & $0$ & $0$ & $0$ & $0$ & $0$ \\	
					\bottomrule
		\end{tabular}} }\label{tab p-value}
	\end{table}
	
	\begin{table}[htbp]
		\centering
		\caption{Average $H_0$ reject rate of test over SCZ and HC groups.}
		{\scalebox{0.75}{\begin{tabular}{*{10}{c}}
					\toprule			    
					$T$ & $0.30$ & $0.35$ & $0.40$ & $0.45$ & $0.50$ & $0.55$ & $0.60$ & $0.65$ & $0.70$ \\
					\midrule					
					$TW_1$-AVG & 
					$1$ & $1$ & $1$ & $1$ & $1$ & $0$ & $0$ & $0$ & $0$ \\
					$TW_1$-SBM & 
					$1$ & $1$ & $1$ & $1$ & $1$ & $1$ & $1$ & $1$ & $1$ \\
					$TW_1$-MNBS & 
					$1$ & $1$ & $1$ & $1$ & $1$ & $1$ & $1$ & $1$ & $1$ \\
					$\chi^2$-type & 
					$0$ & $0$ & $0$ & $0$ & $0$ & $0$ & $0$ & $0$ & $0$ \\	
					$N$-type & 
					$1$ & $1$ & $1$ & $1$ & $1$ & $1$ & $1$ & $1$ & $1$ \\					
					\bottomrule
		\end{tabular}}} \label{tab power}
	\end{table}
	
	To investigate the performance of the tests, we need to compare the type \rom{1} error in Table \ref{tab p-value} and the power result in Table \ref{tab power} together. Table \ref{tab p-value} shows that $TW_1$-type tests based on SBM and AVG have poor performance for the test over HC group because the reject rates all exceed $0.05$ and even equal to $1$. From Table \ref{tab power}, it is found that $\chi^2$-type test loses power for the test over SCZ and HC groups, where the reject rates are all $0$. Only $TW_1$-type test based on MNBS when $T = 0.55, 0.60$ and $N$-type test when $T \geq 0.45$ can perform well in both situations. In addition, applying MNBS, we illustrate the adjacency matrices of subject-specific networks of HC and SCZ groups when $T = 0.60$ in Figure \ref{adjacency}. One can find that the two groups do have differences in the network structure.
	
	\begin{figure}[htbp]
		\centering
		\begin{subfigure}{0.3\columnwidth}
			\centering
			\includegraphics[width = \linewidth]{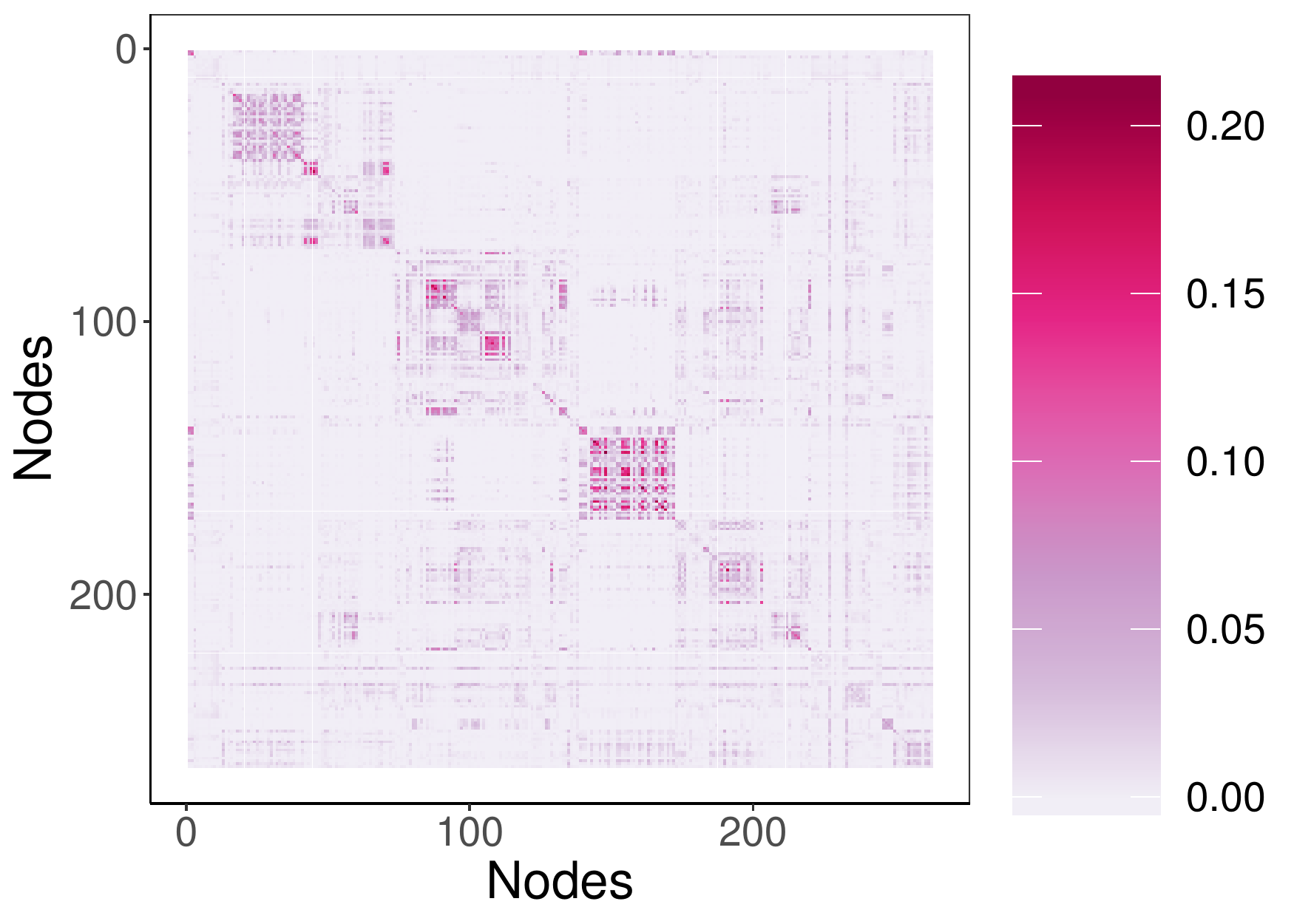} 
			\caption{HC} \label{healthy}
		\end{subfigure}
		\begin{subfigure}{0.3\columnwidth}
			\centering
			\includegraphics[width = \linewidth]{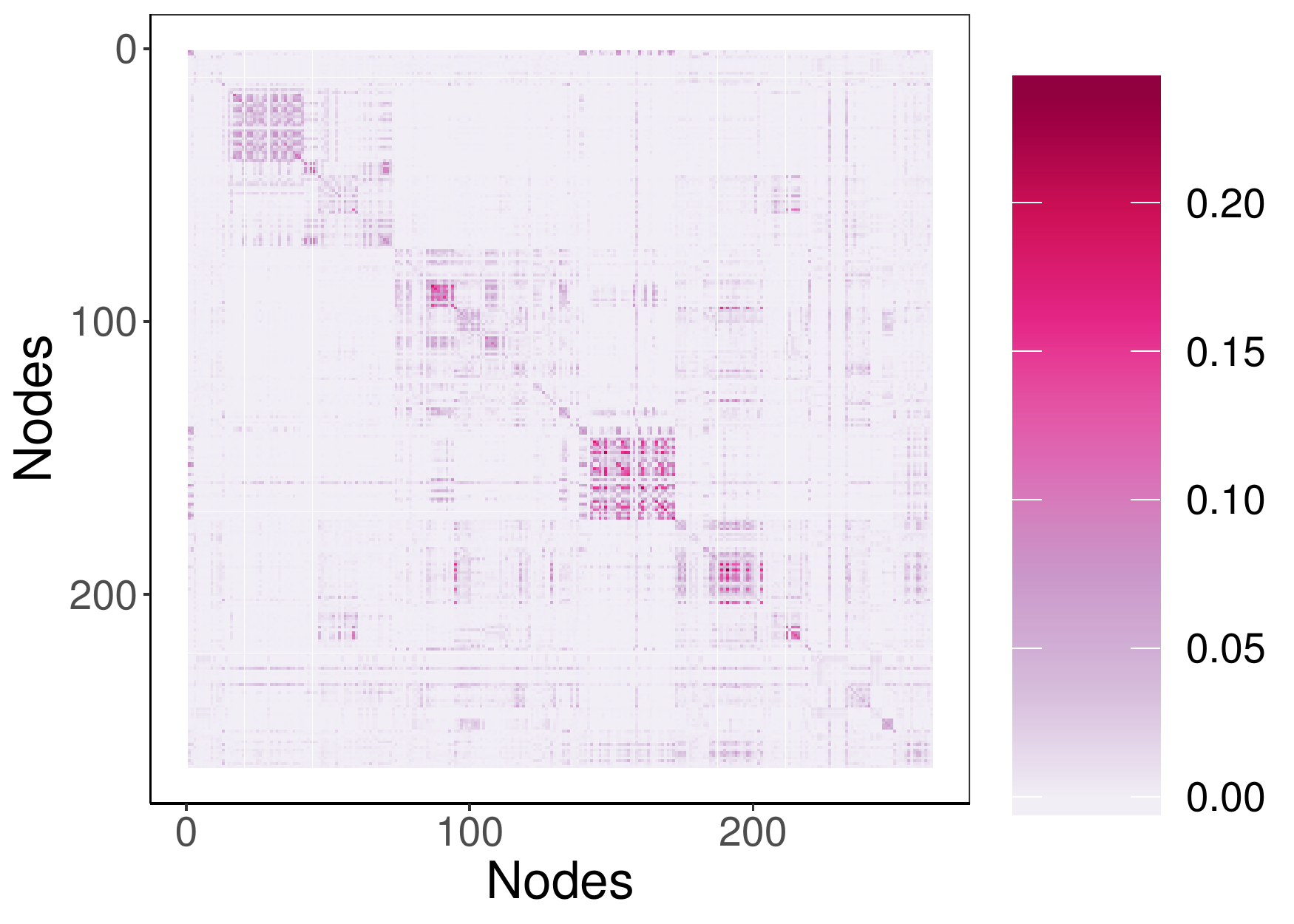} 
			\caption{SCZ} \label{schizophrenia}
		\end{subfigure}
		\caption{Adjacency matrices estimated by MNBS for HC and SCZ groups.}
		\label{adjacency}
	\end{figure}
	
	\subsection{Change-point detection in dynamic networks}
	In this section, we apply CP-TWMNBS, CP-TWAVG, CP-TWSBM, CP-GRA, and CP-DMNBS detections to perform change-point detection for a phone-call network data extracted from RM dataset. The data is collected through an experiment conducted by the MIT Media Laboratory following 106 MIT students and staff using mobile phones with preinstalled software that can record and send call logs from 2004 to 2005 academic year. Note that this is different from the MIT proximity network data considered in \cite{zhao2019} which is based on the bluetooth scans instead of phone calls.  In this analysis, we are interested in whether phone call patterns changed during this time, which may reflect a change in relationship among these subjects. 94 of the 106 RM subjects completed the survey, we remain records only within these participants and filter records before $07/20/2004$ due to the extreme scarcity of sample before that time. Then there remains  81 subjects left and we construct dynamic networks among these subjects by day. For each day, construct a network with the subjects as nodes and a link between two subjects if they had at least one call on that day. We encode the network of each day by an adjacency matrix, with $1$ for element $(i, j)$ if there is an edge between subject $i$ and subject $j$, and $0$ otherwise. Thus, there are in total 310 days from $07/20/2004$ to $06/14/2005$. The calendar of events is included in the appendix. 
	We claim that an estimated change-point is reasonable if it is at most three days away from the real dates the event lasts.

	We first choose $h = 7$ and Figure \ref{identical-result-1} plots the results of different methods on the dynamic networks. The purple shadow areas mark time intervals from the beginning to the end of events continue on MIT academic calendar 2004--2005, which  can be used as references for the estimated change-points' occurrences. The red lines in Figure \ref{identical-result-1} are the estimated change-points applying different detect methods. 
	
	\begin{figure}[htbp]
		\centering
		\includegraphics[width=11cm]{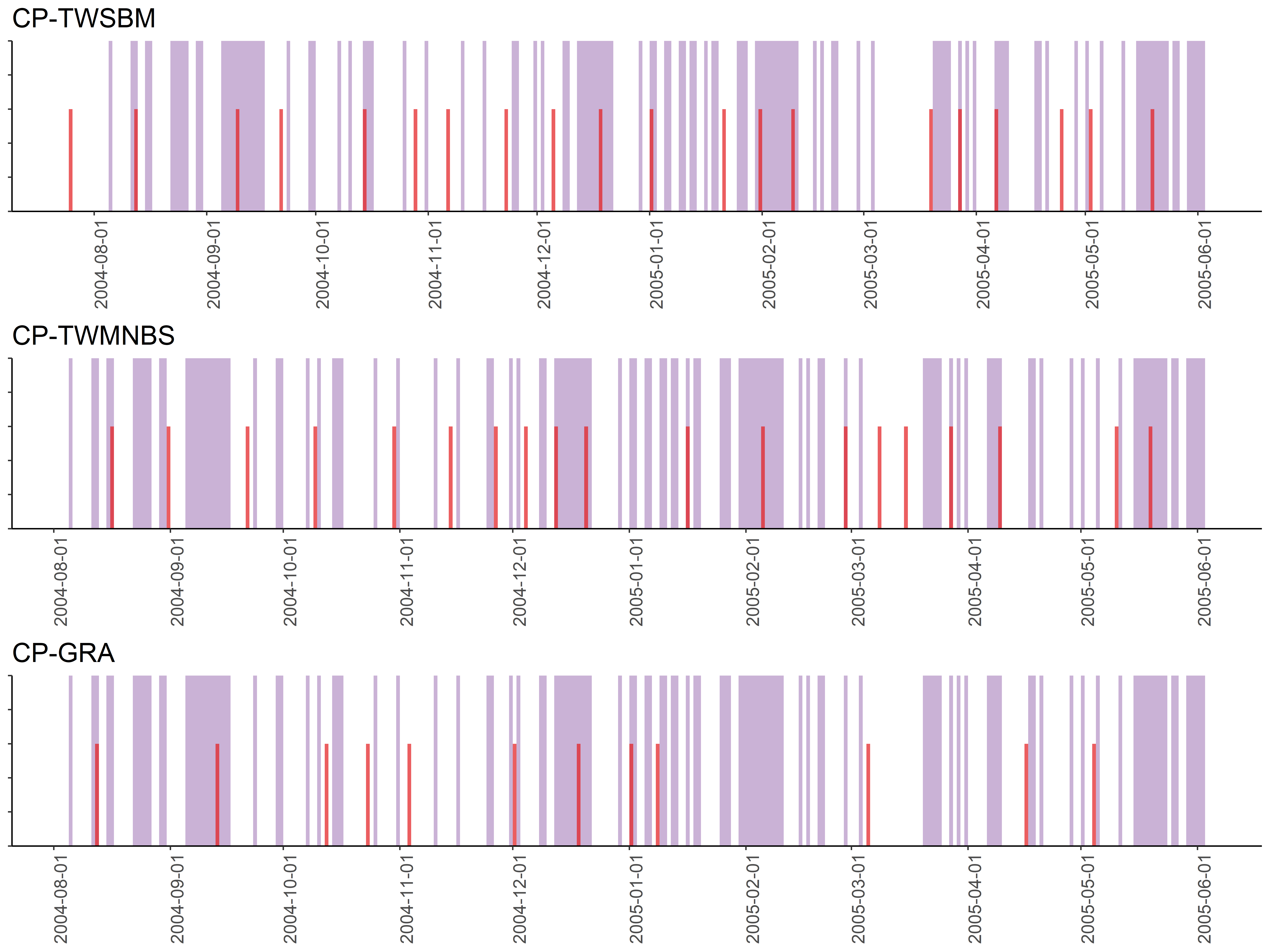} 
		\caption{Calendar time intervals of events and estimated change-points.} \label{identical-result-1}
	\end{figure}
	
	It turns out that CP-TWAVG and CP-DMNBS detections either do not work well or detect no change-point. CP-TWSBM method detects 20 change-points, CP-TWMNBS method detects 19 change-points,  while CP-GRA detection detects 12 change-points. When comparing the estimated change-points to intervals of calendar events, we see that they align each other the best by using CP-TWMNBS detection and then CP-TWSBM detection, whereas there are more estimated change-points by CP-GRA detection  that can not be explained.

	However, it's observed that some of the change-points detected by CP-TWSBM and CP-TWMNBS methods can be a  little trivial. For example, CP-TWMNBS detected a change-point occurred at around 01/09/2004, which is near event ``English Evaluate Test for International Students'' in the calendar. To ignore the less significant events, we only consider the seemingly major events displayed in bold in the calendar as possible reasons for estimated change-points and set $h = 14$, which corresponds to 2 weeks. The details are reported in Table \ref{tab cp}. The CP-TWMNBS and CP-TWSBM methods detect 9 change-points, CP-GRA method detects 13 change-points. Notably CP-GRA method still labels more trivial change-points away from the important events. Based on the results, it is most likely valid in saying that CP-TWSBM and CP-TWMNBS detections are more reliable.

	\begin{table}[htbp]
		\centering
		\caption{Estimated change-points by different methods for MIT phone data.}
		{\scalebox{0.75}{\begin{tabular}{*{8}{c}}
					\toprule			    
					CP-TWSBM & 02/08/2004 & 08/09/2004  & 12/10/2004 &  15/11/2004 &  27/12/2004  & 03/02/2005 & 19/02/2005\\
					&  23/03/2005 &  03/05/2005 &&&&&\\
					\midrule
					CP-TWMNBS & 10/08/2004 & 27/09/2004 & 17/10/2004 &  20/11/2004 & 05/12/2004 & 24/12/2004 & 13/02/2005\\
					& 10/04/2005 & 04/05/2005 &&&&& \\
					\midrule
					CP-GRA & 13/08/2004 & 01/09/2004 & 14/09/2004 & 28/09/2004 & 13/10/2004 & 04/11/2004 & 18/11/2004\\
					& 02/12/2004 & 19/12/2004 & 09/01/2005 & 06/03/2005 & 17/04/2005 & 05/05/2005 &\\
					\bottomrule
		\end{tabular}}} \label{tab cp}
	\end{table}

	\section{Conclusion}
	We consider the problem of hypothesis testing on whether two populations of networks defined on a common vertex set are from the same distribution. Two-sample testing on populations of networks is a challenging task especially when the the number of nodes  is large. We propose a general $TW_1$-type test (which is later adapted to a change-point detection procedure in dynamic networks), derive its asymptotic distribution and asymptotic power. The test statistics utilizes some plugin estimates for the link probability matrices and properties of the resulting tests  with various estimates are discussed by  evaluating and comparing $TW_1$-type tests based on MNBS, AVG, SBM  theoretically, and numerically with both simulated and real data. From the simulation study, we see that the proposed $TW_1$-type test based on MNBS performs the best and yields robust results even when the structure is sparse. In addition, we provide a significant modification of the two-sample network test for change-point detection in dynamic networks. Simulation and real data analyses show that the procedure is consistent, principled and practically viable.

	\section*{Acknowledgements}
	\vspace{-.5em}
	The work of Li Chen was supported by the China Scholarship Council under Grant 201806240032 and the Fundamental Research Funds for the Central Universities, Southwest Minzu University under grant 2021NQNCZ02. 
	The work of Jie Zhou was supported in part by the National Natural Science Foundation of China under grants 61374027 and 11871357, and in part by the Sichuan Science and Technology Program under grant 2019YJ0122. 
	Lizhen Lin acknowledges the generous support from NSF grants IIS 1663870, DMS Career 1654579, DMS 2113642 and a DARPA grant N66001-17-1-4041.

	\appendix
	
	The appendix mainly includes theorem proofs  omitted and the academic calendar of MIT we use in the paper.
	
	\section{Preliminaries}
	\begin{proposition}[Hoeffding's inequality \citep{hoeffding1963}]
		\label{Hoeffding}
		If $X_1, X_2,\ldots, X_m$ are independent random variables and $a_i \leq X_i \leq b_i (i = 1,2, \ldots,m)$, then for $t > 0$, 
		$$P\left(\bar{X} - \mu \geq t\right)\leq \exp\left\{-\frac{2 m^2 t^2}{\sum_{i = 1}^{m} (b_i - a_i)^2}\right\},$$
		where $\bar{X} = \frac{1}{m}\sum_{i=1}^{m}X_i,  \ \mu = E(\bar{X}).$
	\end{proposition}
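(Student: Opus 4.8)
The plan is to follow the classical Chernoff--Hoeffding argument, which proceeds by an exponential Markov bound, factorization of the moment generating function through independence, a uniform bound on each centered factor (Hoeffding's lemma), and finally optimization over the free exponential parameter.

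First I would center the variables by setting $Y_i = X_i - E X_i$, so that $\sum_{i=1}^m Y_i = m(\bar X - \mu)$ and each $Y_i$ lies in an interval of the same width $b_i - a_i$. For any $s > 0$, applying Markov's inequality to the increasing map $x \mapsto e^{sx}$ gives
\begin{equation*}
P(\bar X - \mu \ge t) = P\Bigl(\sum_{i=1}^m Y_i \ge mt\Bigr) \le e^{-smt}\, E\Bigl[\exp\Bigl(s\sum_{i=1}^m Y_i\Bigr)\Bigr].
\end{equation*}
Since the $X_i$, and hence the $Y_i$, are independent, the expectation factorizes as $\prod_{i=1}^m E[e^{sY_i}]$.

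The key step, and the main obstacle, is Hoeffding's lemma: for a mean-zero random variable $Y \in [c,d]$ one has $E[e^{sY}] \le \exp\bigl(s^2(d-c)^2/8\bigr)$. I would prove this by exploiting the convexity of $x \mapsto e^{sx}$: writing each $x \in [c,d]$ as the convex combination $x = \frac{d-x}{d-c}c + \frac{x-c}{d-c}d$ yields the pointwise bound $e^{sx} \le \frac{d-x}{d-c}e^{sc} + \frac{x-c}{d-c}e^{sd}$. Taking expectations and using $E Y = 0$ reduces $\log E[e^{sY}]$ to a function $L(u) = -pu + \log(1 - p + p e^u)$ with $u = s(d-c)$ and $p = -c/(d-c) \in [0,1]$. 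A short computation shows $L(0) = L'(0) = 0$ and $L''(u) = p(1-p)e^u/(1-p+pe^u)^2 \le 1/4$ for every $u$, so Taylor's theorem gives $L(u) \le u^2/8$, which is precisely the claimed bound.

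Finally, combining the factorization with Hoeffding's lemma applied to each $Y_i$ (of range $b_i - a_i$) yields
\begin{equation*}
P(\bar X - \mu \ge t) \le \exp\Bigl(-smt + \tfrac{s^2}{8}\sum_{i=1}^m (b_i - a_i)^2\Bigr).
\end{equation*}
The exponent is a quadratic in $s$ minimized at $s^\ast = 4mt/\sum_{i=1}^m (b_i - a_i)^2 > 0$; substituting $s^\ast$ collapses it to $-2m^2 t^2/\sum_{i=1}^m (b_i - a_i)^2$, giving exactly the stated inequality. The only genuinely delicate part is the second-derivative bound $L''(u) \le 1/4$ underpinning Hoeffding's lemma, since the remaining steps are routine applications of independence and single-variable minimization.
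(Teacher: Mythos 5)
Your proposal is correct, and it is precisely the classical Chernoff--Hoeffding argument (exponential Markov bound, factorization by independence, Hoeffding's lemma proved via convexity and the bound $L''(u) \leq 1/4$, then optimization in $s$), which is the proof in the cited source. The paper itself gives no proof of this proposition---it is stated as a known preliminary with a citation to \cite{hoeffding1963}---so your derivation simply and correctly supplies the standard argument the paper defers to.
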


	\begin{proposition}[Bernstein's inequality \citep{Bernstein}]
		\label{Bernstein}
		Let $X_1, X_2,\ldots, X_m$ be independent zero-mean random variables. Suppose that $|X_i|\leq M$ with probability $1$ for all $i$. Then for all positive $t$, we have
		$$P\left(\sum_{i=1}^{m}X_i>t\right)\leq \exp\left\{-\frac{\frac{1}{2}t^2}{\sum_{i=1}^{m}E(X_i^2)+\frac{1}{3}Mt}\right\}.$$
	\end{proposition}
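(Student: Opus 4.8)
The plan is to use the classical Chernoff (exponential Markov) device together with a sharp bound on the moment generating function of each bounded, centered summand. First I would fix $\lambda>0$ and write, by Markov's inequality applied to the increasing map $x\mapsto e^{\lambda x}$ and by independence, $P\big(\sum_{i=1}^m X_i > t\big) \le e^{-\lambda t}\prod_{i=1}^m E\big(e^{\lambda X_i}\big)$. This reduces the entire problem to controlling each factor $E(e^{\lambda X_i})$ and then optimizing over the free parameter $\lambda$.

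The key step is bounding the moment generating function. Expanding $e^{\lambda X_i}=1+\lambda X_i+\sum_{k\ge 2}\lambda^k X_i^k/k!$ and taking expectations, the linear term vanishes because $E(X_i)=0$, so $E(e^{\lambda X_i})=1+\sum_{k\ge 2}\lambda^k E(X_i^k)/k!$. Using $|X_i|\le M$ with probability $1$ gives $E(X_i^k)\le |E(X_i^k)|\le M^{k-2}E(X_i^2)$ for $k\ge 2$, whence $\sum_{k\ge 2}\lambda^k E(X_i^k)/k! \le \big(E(X_i^2)/M^2\big)\sum_{k\ge 2}(\lambda M)^k/k!$. The constant $1/3$ in the target denominator comes precisely from the elementary inequality $k!\ge 2\cdot 3^{k-2}$ for $k\ge 2$, which yields $\sum_{k\ge 2}(\lambda M)^k/k! \le \tfrac12 (\lambda M)^2/(1-\lambda M/3)$ for $0<\lambda M<3$; combined with $1+y\le e^y$ this gives $E(e^{\lambda X_i})\le \exp\big(\tfrac12 \lambda^2 E(X_i^2)/(1-\lambda M/3)\big)$.

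Multiplying the factors and writing $\sigma^2:=\sum_{i=1}^m E(X_i^2)$ then yields $P\big(\sum_{i=1}^m X_i>t\big)\le \exp\big(-\lambda t + \tfrac12\lambda^2\sigma^2/(1-\lambda M/3)\big)$, valid for every $0<\lambda<3/M$. Finally I would make the explicit choice $\lambda = t/(\sigma^2 + Mt/3)$, which indeed lies in $(0,3/M)$ since $\lambda M < Mt/(Mt/3)=3$; substituting it makes $1-\lambda M/3 = \sigma^2/(\sigma^2+Mt/3)$ and collapses the exponent to $-\tfrac12 t^2/(\sigma^2+Mt/3)$, exactly the claimed bound. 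The delicate point is producing the sharp constant $1/3$: a cruder estimate of the exponential-series tail (for instance the familiar $k!\ge 2^{k-1}$) gives only a weaker denominator, so the main obstacle is pairing the sharper factorial bound $k!\ge 2\cdot 3^{k-2}$ with the matching closed-form $\lambda$ so that the arithmetic telescopes to the stated inequality rather than merely to one of the same shape with a worse constant.
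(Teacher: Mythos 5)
Your proof is correct and complete: the Chernoff bound, the moment-generating-function estimate via $|E(X_i^k)|\leq M^{k-2}E(X_i^2)$ and $k!\geq 2\cdot 3^{k-2}$, and the choice $\lambda = t/\big(\sum_{i=1}^m E(X_i^2)+Mt/3\big)$ do telescope exactly to the stated exponent. The paper states this proposition as a classical result with a citation and gives no proof of its own, and your argument is precisely the standard derivation behind that cited result (the only negligible omission being the degenerate case $\sum_{i=1}^m E(X_i^2)=0$, where the inequality holds trivially).
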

	
	For a sequence of independent Bernoulli random variables where $X_i \sim \text{Bernoulli}(p)$, by Proposition \ref{Hoeffding} we have
	$$P\big(\left|\bar{X} - p \right| \geq t\big)\leq 2 \exp\big\{-2 m t^2\big\}.$$
	Similarly, by Proposition \ref{Bernstein}, we have
	$$P\big(\left|\bar{X} - p \right|>t\big)\leq 2\exp\left\{ -\frac{\frac{1}{2} m t^2}{p (1 - p) + \frac{1}{3} t}\right\}.$$
	
	\begin{lemma}[Asymptotic distributions of $\lambda_1(Z)$ and $\lambda_n(Z)$]
		\label{Asymptotic distributions of Z_star for two-sample}
		For $Z$ defined in \eqref{Z for two-smaple} in subsection \ref{subsection General model}, we have
		\begin{equation*} 
		n^{2/3} [\lambda_1 (Z) - 2] \rightsquigarrow TW_1, ~
		n^{2/3} [- \lambda_n (Z) - 2] \rightsquigarrow TW_1.
		\end{equation*}
	\end{lemma}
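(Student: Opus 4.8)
The plan is to recognize that, under the null hypothesis $P_1=P_2$, the matrix $Z$ is a real symmetric \emph{generalized Wigner matrix} with mean-zero entries, and then to invoke edge universality for such matrices to obtain the Tracy--Widom limit. First I would record the three structural facts that place $Z$ in this class. Since the edges of all networks are generated independently, the off-diagonal entries $\{Z_{ij}\}_{i<j}$ are independent; since $E(\bar A_{1,ij}-\bar A_{2,ij})=P_{1,ij}-P_{2,ij}=0$ under the null, each $Z_{ij}$ has mean zero; and by the exact standardization built into the denominator of \eqref{Z for two-smaple}, each off-diagonal entry has variance exactly $1/(n-1)$, with the diagonal equal to zero. (Under the alternative one instead has $E(Z_{ij})=\tilde Z_{ij}$ from \eqref{Z tuta}, which is why the clean limit is invoked in the null regime in which this lemma is used.)

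The second step is to verify the variance-profile normalization. Setting $s_{ij}=\mathrm{Var}(Z_{ij})=1/(n-1)$ for $i\neq j$ and $s_{ii}=0$, each row sum satisfies $\sum_{j=1}^n s_{ij}=(n-1)/(n-1)=1$; the factor $n-1$ in \eqref{Z for two-smaple} is precisely what compensates for the vanishing diagonal. This doubly-stochastic variance profile forces the empirical spectral distribution of $Z$ to converge to the semicircle law on $[-2,2]$, so that $\lambda_1(Z)\to 2$ and $\lambda_n(Z)\to -2$ with the correct $n^{-2/3}$ fluctuation scale at the spectral edge.

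The third step is to check the moment and tail hypotheses of the universality theorem and then apply it. Writing $W_{ij}=\sqrt{n-1}\,Z_{ij}$, each $W_{ij}$ is a standardized difference of two independent scaled Binomials and hence has unit variance; the goal is to show $E|W_{ij}|^p\le C_p$ uniformly in $i,j,n$ for every fixed $p$, together with a uniform sub-exponential tail bound. This follows from Bernstein's inequality (Proposition~\ref{Bernstein}) applied to the centered Bernoulli sums in the numerator: the standardized sum enjoys Gaussian-type concentration, so all moments are controlled by constants independent of $n$. With these hypotheses in force I would invoke the edge-universality result for generalized Wigner ensembles (the Tracy--Widom law of \citealp{tracy1996}, in the form established for inhomogeneous Erd\H{o}s--R\'{e}nyi/generalized Wigner matrices and used in \citealp{ghoshdastidar2018}) to conclude $n^{2/3}[\lambda_1(Z)-2]\rightsquigarrow TW_1$; applying the identical argument to $-Z$, whose largest eigenvalue equals $-\lambda_n(Z)$, then yields the companion statement $n^{2/3}[-\lambda_n(Z)-2]\rightsquigarrow TW_1$.

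The hard part will be the uniform moment and tail control in the sparse, small-sample regime. When $P_{u,ij}\to 0$, or when the sample sizes $m_u$ are small, the standardized entry $W_{ij}$ becomes highly discrete and skewed, and the sub-exponential constant required by the universality theorem can degrade; ensuring the bounds hold with constants independent of $n$ therefore imposes a joint constraint relating the decay rate of the link probabilities to the growth of $m_1$ and $m_2$. Pinning down this interplay --- exactly the coupling between sparsity and sample size that the paper emphasizes --- and verifying that the off-diagonal entries satisfy the hypotheses of the chosen edge-universality theorem uniformly in $(i,j)$ is the crux of the argument.
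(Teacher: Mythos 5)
Your proposal follows the same overall strategy as the paper---recognize that under $H_0$ the matrix $Z$ has independent, mean-zero off-diagonal entries with variance exactly $1/(n-1)$ and a zero diagonal, then appeal to Tracy--Widom edge universality---but the two implementations differ in how universality is invoked. The paper does not apply a generalized-Wigner edge-universality theorem to $Z$ directly. Instead it introduces a Gaussian reference ensemble $G$ (symmetric, zero diagonal, off-diagonal variance $1/(n-1)$), applies Theorem 1.2 of \cite{lee2014} to the rescaled matrix $H_G=\sqrt{(n-1)/n}\,G$, absorbs the factor $\sqrt{n/(n-1)}=1+O_n(n^{-1})$ into the $o_p(n^{-2/3})$ error, and then transfers the limit from $G$ to $Z$ via the edge comparison theorem of \cite{erdHos2012} (Theorem 2.4), using only that the first two moments of the entries of $Z$ and $G$ agree. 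Your direct route is cleaner in exposition and your row-sum check $\sum_j \mathrm{Var}(Z_{ij})=1$ is exactly the generalized-Wigner normalization; what the paper's two-step route buys is an explicit handling of the vanishing diagonal and the $1/(n-1)$ (rather than $1/n$) scaling, since the Gaussian step is covered by a theorem that tolerates a zero diagonal and the transfer step needs only moment matching rather than re-verifying a full set of hypotheses for $Z$ itself. Your closing concern, however, is well taken and is not resolved by the paper either: the comparison theorem of \cite{erdHos2012} also assumes uniformly sub-exponential entries, and for $Z_{ij}$---a standardized difference of binomial averages---this holds with constants independent of $n$ only under a lower bound on quantities like $m_u P_{u,ij}(1-P_{u,ij})$, i.e., precisely the sparsity/sample-size coupling you identify. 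The lemma as stated carries no such hypothesis, so this verification is absent from the paper's proof as well; on this point your proposal is, if anything, the more careful of the two.
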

	
	\begin{proof}
		Let $G$ be an $n \times n$ symetric matrix whose upper diagonal entries are independent normal with mean zero and variance $1 / (n - 1)$, and zero diagonal entries. Let $H_G = \sqrt{(n - 1) / n} G$, according to Theorem 1.2 in \cite{lee2014}, $n^{2/3}[\lambda_1(H_G) - 2]$ converges to $TW_1$ in distribution. For convenience and without ambiguity, we also use $TW_1$ to denote a random variable following the Tracy--Widom law with index 1. Then we have
		\begin{equation*}
		\lambda_1(H_G) = 2 + n^{-2/3} TW_1 + o_p(n^{- 2/3}).
		\end{equation*}
		Further,
		\begin{equation*}
		\lambda_1(G) = \sqrt{\frac{n}{n - 1}} \lambda_1(H_G)  = \big[ 1 + O_n\big( n^{- 1}\big) \big]  \lambda_1(H_G) = 2 +  n^{-2/3} TW_1 + o_p(n^{- 2/3}),
		\end{equation*}		
		which is equivalent to 
		\begin{equation*}
		n^{2 / 3} [\lambda_1(G) - 2] \rightsquigarrow TW_1.
		\end{equation*}
		
		Since the first and second moments of entries of $Z$ and $G$ are the same, it follows from Theorem 2.4 in \cite{erdHos2012} that $n^{2/3}[\lambda_1(Z) - 2]$ and $n^{2/3}[\lambda_1(G) - 2]$ have the same limiting distribution. Therefore, $$n^{2/3} [\lambda_1 (Z) - 2] \rightsquigarrow TW_1.$$
		The same argument applies to $\lambda_n (Z)$.
		
	\end{proof}
	
	\section{Proof of Theorem \ref{asymptotic null distribution result for GENERAL}}
	Under the null hypothesis $H_0$, we have $P_1 = P_2 \equiv P$, and it's not difficult to observe that
	\begin{equation} \label{Z_bar and Z^star GENERAL}
	\hat{Z}_{i j} = \frac{\sqrt{\frac{1}{m_1} P_{i j} (1 - P_{i j}) + \frac{1}{m_2} P_{i j} (1 - P_{i j})}}{\sqrt{\frac{1}{m_1} \hat{P}_{1, i j} \big(1 - \hat{P}_{1, i j}\big) + \frac{1}{m_2} \hat{P}_{2, i j} \big(1 - \hat{P}_{2, i j}\big)}} Z_{i j}.
	\end{equation}
	
	Since
	\begin{equation} \label{order of P_hat GENERAL}
	\sup_{i, j} |\hat{P}_{u, i j} - P_{i j}| = o_p(n^{- 2 / 3}),
	\end{equation}
	for the numerator in \eqref{Z_bar and Z^star GENERAL}, utilizing the Taylor Expansion, we have	
	\begin{align*} 
	& \quad \sqrt{\frac{1}{m_1} P_{i j} (1 - P_{i j}) + \frac{1}{m_2} P_{i j} (1 - P_{i j})} \\
	& = \sqrt{\frac{m_1 + m_2}{m_1 m_2}  P_{i j} (1 - P_{i j})} \\
	& = \sqrt{\frac{m_1 + m_2}{m_1 m_2}} \Big[ \sqrt{\hat{P}_{1, i j} \big(1 - \hat{P}_{1, i j}\big)} + O_n\big(P_{i j} - \hat{P}_{1, i j}\big)\Big] \\ 
	& =  \sqrt{\frac{m_1 + m_2}{m_1 m_2}} \Big[ \sqrt{\hat{P}_{1, i j} \big(1 - \hat{P}_{1, i j}\big)} + o_p( n^{- 2 / 3}) \Big] \\
	& = \sqrt{\frac{1}{m_1} \hat{P}_{1, i j} \big(1 - \hat{P}_{1, i j}\big) + \frac{1}{m_2} \hat{P}_{1, i j} \big(1 - \hat{P}_{1, i j}\big)} + \sqrt{\frac{m_1 + m_2}{m_1 m_2}}  o_p(n^{- 2 / 3}),
	\end{align*}
	where the third equality is obtained by condition \eqref{order of P_hat GENERAL}.
	
	Without loss of gernerality, assume $\hat{P}_{1, i j} \big(1 - \hat{P}_{1, i j}\big) \leq \hat{P}_{2, i j} \big(1 - \hat{P}_{2, i j}\big)$, then
	\begin{align} \label{approx1}
	& \quad \sqrt{\frac{1}{m_1} P_{i j} (1 - P_{i j}) + \frac{1}{m_2} P_{i j} (1 - P_{i j})}  \\
	& \leq \sqrt{\frac{1}{m_1} \hat{P}_{1, i j} \big(1 - \hat{P}_{1, i j}\big) + \frac{1}{m_2} \hat{P}_{2, i j} \big(1 - \hat{P}_{2, i j}\big)} + \sqrt{\frac{m_1 + m_2}{m_1 m_2}} o_p(n^{- 2 / 3}). \nonumber
	\end{align}	
	Similarily, we have
	\begin{equation} \label{approx2}
	\begin{split}		
	& \quad \sqrt{\frac{1}{m_1} \hat{P}_{1, i j} \big(1 - \hat{P}_{1, i j}\big) + \frac{1}{m_2} \hat{P}_{2, i j} \big(1 - \hat{P}_{2, i j}\big) }  \\
	& \leq \sqrt{\frac{1}{m_1} P_{i j} (1 - P_{i j}) + \frac{1}{m_2} P_{i j} (1 - P_{i j})} + \sqrt{\frac{m_1 + m_2}{m_1 m_2}} o_p( n^{- 2 / 3}). 
	\end{split}
	\end{equation}
	From \eqref{approx1} and \eqref{approx2}, 
	\begin{equation} \label{approx GENERAL}
	\begin{split}
	& \quad \sqrt{ \frac{1}{m_1} P_{i j} (1 - P_{i j}) + \frac{1}{m_2} P_{i j} (1 - P_{i j})}  \\
	& = \sqrt{\frac{1}{m_1} \hat{P}_{1, i j} \big(1 - \hat{P}_{1, i j}\big) + \frac{1}{m_2} \hat{P}_{2, i j} \big(1 - \hat{P}_{2, i j}\big)} + \sqrt{ \frac{m_1 + m_2}{m_1 m_2} } o_p(n^{- 2 / 3}). 
	\end{split}
	\end{equation}

	Combining \eqref{approx GENERAL} with \eqref{Z_bar and Z^star GENERAL}, we have
	\begin{equation} \label{Z hat Z}
	\hat{Z} - Z = M \circ Z,
	\end{equation}
	where $M$ is an $n \times n$ matrix whose elements $M_{i j} = o_p(n^{- 2 / 3})$ and the notation $\circ$ denotes the Hadamard (element-wise) product of two matrices.
	
	One has
	\begin{equation}
	\begin{split}
	\|\hat{Z} - Z\|_{op} & = \|M \circ Z\|_{op} \\
	& = \sup_{\substack{\|x\|_2 = 1, \\ x \in \R^n}} \|(M \circ Z) x\|_2 \\
	& = \sup_{\substack{\|x\|_2 = 1, \\ x \in \R^n}} \sqrt{\sum_{i = 1}^n \left(\sum_{j = 1}^n M_{i j} Z_{i j} x_j\right)^2},\\
	&  = \sqrt{\sum_{i = 1}^n \left(\sum_{j = 1}^n M_{i j} Z_{i j} x^*_j\right)^2}, \label{operator 2}
	\end{split}
	\end{equation}
	where $\norm \cdot _{op}$ denotes the operator norm of a matrix, $\|\cdot\|_2$ is the Euclidean norm of a vector, and $x^*$ is a unit eigenvector of the largest singular of $M \circ Z$. 
	
	Define an $n \times n$ symmetric matrix as $M^*$ and a unit vector as $x'$. Consider the last equality of \eqref{operator 2}, let 
		\begin{equation*}
		M^*_{i j} = 
		\begin{cases}
		M_{i j}, \ & Z_{i j} x^*_j \geq 0, \\
		- M_{i j}, \ & Z_{i j} x^*_j < 0,
		\end{cases}		
		\end{equation*} 
	
	and 
		\begin{equation*}
		x'_{j} = 
		\begin{cases}
		x^*_j, \ & Z_{i j} x^*_j \geq 0, \\
		- x^*_j, \ & Z_{i j} x^*_j < 0.
		\end{cases}	
		\end{equation*}
	Therefore, we have
		\begin{equation}
		\begin{split}
		\|\hat{Z} - Z\|_{op} & = \sqrt{\sum_{i = 1}^n \left(\sum_{j = 1}^n M^*_{i j} Z_{i j} x'_j\right)^2}  \\
		& \leq  (\sup_{\substack{i,j}} |M^*_{i j}|) \sqrt{\sum_{i = 1}^n \left(\sum_{j = 1}^n Z_{i j} x'_j\right)^2}  \\
		& \leq (\sup_{\substack{i,j}} |M^*_{i j}|) \|Z\|_{op}. \label{eq}
		\end{split}	
		\end{equation}
		The first inequality in \eqref{eq} holds true since $Z_{i j}^* x'_j$ are non-negative for all $i$ and $j$. In addition, $Z$ is a Wigner matrix and from Corollary 2.3.6 in \cite{tao2012}, the norm of $Z$ satisties
	\begin{equation*}
	\|Z\|_{op} = O_p(1).
	\end{equation*}
	The meaning of notation $O_p(\cdot)$ is as follows: For two sequences of real numbers $\{x_n\}$ and $\{y_n\}$, we write $y_n = O_p(x_n)$ if for any $\varepsilon > 0$, there exist finite $C > 0$ and $N > 0$ such that $P\big( \big| \frac{y_n}{x_n} \big| > C \big) < \varepsilon$ for any $ n > N$.
	
	It is noted that $M^*_{i j} = o_p(n^{- 2 / 3})$, so	
	\begin{equation*}
	\|\hat{Z} - Z\|_{op} \leq o_p(n^{- 2 / 3}).
	\end{equation*}
	Then
	\begin{equation} \label{lambda_0 GENERAL}
	| \lambda_1 (\hat{Z}) - \lambda_1 (Z) | \leq o_p(n^{- 2/3}).
	\end{equation}
	Combining \eqref{lambda_0 GENERAL} with Lemma \ref{Asymptotic distributions of Z_star for two-sample}, we have
	\begin{equation*} \label{lambda_1 GENERAL}
	n^{2/3} [\lambda_1 (\hat{Z}) - 2] \rightsquigarrow TW_1.
	\end{equation*}
	Similarly, we can prove
	\begin{equation*}  \label{lambda_2 GENERAL}
	n^{2/3} [- \lambda_n (\hat{Z}) - 2] \rightsquigarrow TW_1.
	\end{equation*}
	
	
	
	\section{Proof of Corollary \ref{asymptotic type \rom{1} error for GENERAL}}
	\begin{align*}
	P(T_{TW_1} \geq \tau_{\alpha / 2}) & \leq P \big(n^{2 / 3} [\lambda_1(\hat{Z}) - 2] \geq \tau_{\alpha / 2}\big) + P\big(n^{2 / 3} [- \lambda_n(\hat{Z}) - 2] \geq \tau_{\alpha / 2}\big) \\
	& = \alpha / 2 + o_n(1) + \alpha / 2 +o_n(1)\\
	& = \alpha + o_n(1).
	\end{align*}
	
	
	\section{Proof of of Corollary \ref{asymptotic power guarantee for GENERAL}}
	Define a matrix $W \in \R^{n \times n}$ with zero diagonal and for any $i \neq j$,
	$$W_{i j} = \frac{(P_{1, i j} - P_{2, i j}) - (\bar{A}_{1, i j} - \bar{A}_{2, i j})}{\sqrt{ (n - 1) \Big[ \frac{1}{m_1}P_{1, i j} ( 1 - P_{1, i j}) + \frac{1}{m_2}P_{2, i j} ( 1 - P_{2, i j})\Big]  }}.$$
	Recall the definitions of $Z, \hat{Z}$ and $\tilde{Z}$ given by \eqref{Z for two-smaple}, \eqref{Z_hat for two-sample} and \eqref{Z tuta} {in subsection \ref{subsection General model} respectively, from \eqref{Z hat Z}, it is easy to get
	$$\hat{Z}_{i j} = [ 1 + o_p(n^{-2/3})] (\tilde{Z}_{i j} - W_{i j}).$$
	Thus
	$$[ 1 + o_p(n^{-2/3})] \tilde{Z}_{i j} = \hat{Z}_{i j} + [ 1 + o_p(n^{-2/3})] W_{i j}.$$

	This implies
	$$\hat{Z} = \tilde{Z} \circ (J + D) - W \circ (J + D),$$
	where $J$ is an $n \times n$ matrix with every element equal to 1, and $D$ is an $n \times n$ matrix with elements $D_{i j} = o_p(n^{- 2 / 3})$. Similarly with the proof of Theorem \ref{asymptotic null distribution result for GENERAL}, we can get $\sigma_1(\tilde{Z} \circ (J + D)) = \sigma_1(\tilde{Z})$,  $\sigma_1(W \circ (J + D)) =  \sigma_1(W)$ with probability 1 as $n \to \infty$. 
	
	Applying the triangle inequality of spectral norm, we have
	$$\sigma_1(\hat{Z}) \geq\sigma_1{\tilde{(Z)}} - \sigma_1{(W)}$$ 		
	with probability $1$ as $n$ tends to infinity. Noting that $W$ is a mean zero matrix whose singular value can be bounded by using the $TW_1$ asymptotic distribution. Hence, for any $\beta \in (0, 1)$,
	\begin{equation}  \label{W}
	\begin{split}
	& \quad P\big(\sigma_1(W) \leq 2 + n^{- 2 / 3} \tau_{\beta}\big) \\
	& = 1 - P\big(\sigma_1(W) > 2 + n^{- 2 / 3} \tau_{\beta}\big) \\
	& \geq 1 - \big[P\big(\lambda_1(W) > 2 + n^{- 2 / 3} \tau_{\beta}\big) + P\big( - \lambda_n(W) > 2 + n^{- 2 / 3} \tau_{\beta}\big)\big] \\
	& = 1 - 2 \beta + o_n(1).
	\end{split}
	\end{equation}
	Set $\tau_{\beta}  =  n^{2 / 3}[ \sigma_1(\tilde{Z}) - 4] - \tau_{\alpha / 2}$, and plug this in \eqref{W}, then we have 
	\begin{align*}
	1 - 2 \beta + o_n(1) & \leq P\left( \sigma_1{(W)} \leq 2 + n^{- 2 / 3} \big\{n^{2 / 3}[ \sigma_1(\tilde{Z}) - 4] - \tau_{\alpha / 2}\big\}\right) \\
	& = P\left( 2 + n^{- 2 / 3} \tau_{\alpha / 2} \leq \sigma_1(\tilde{Z}) - \sigma_1{(W)}\right) \\
	& \leq P\left( 2 + n^{- 2 / 3} \tau_{\alpha / 2} \leq \sigma_1(\hat{Z})\right) \\
	& = P\big(n^{2 / 3} [\sigma_1(\hat{Z}) - 2] \geq \tau_{\alpha / 2}\big) \\
	& = P(T_{TW_1} \geq \tau_{\alpha / 2}).
	\end{align*}
	Observe that if $n^{- 2 / 3}[ \sigma_1(\tilde{Z}) - 4]^{- 1} \leq o_n(1)$, for a fixed $\alpha \in (0, 1)$, we have $\tau_{\beta}^{- 1} = o_n(1)$, that is $\beta = o_n(1)$. Therefore,
	$$P(T_{TW_1} \geq \tau_{\alpha / 2}) = 1 + o_n(1).$$

	\section{Proof of Theorem \ref{consistency of change-points detection}}
	For any $t$ that is not a change-point, since $t \in \cJ$, we have		
	\begin{align*}
	P(T_{TW_1}(t, h) > \vartriangle_{T_{TW_1}}) & = 1 -  P(T_{TW_1}(t, h) \leq \vartriangle_{T_{TW_1}}) \\
	& = 1 - \prod_{t' \in (t - h, t + h)} P(T_{TW_1}(t', h) \leq \vartriangle_{T_{TW_1}}) \\
	& = 1 - \prod_{t' \in (t - h, t + h)} \big[1 -  P(T_{TW_1}(t', h) > \vartriangle_{T_{TW_1}})\big] \\
	& \leq 1 - \prod_{t' \in (t - h, t + h)} \big[1 -  P(T_{TW_1}(t', h) > \tau_{\alpha} )\big]  \\
	& \leq 1 - (1 - 2 \alpha)^{2 h} + o_n(1) \\
	& = 1/n + o_n(1) \rightarrow 0.
	\end{align*}
	
	For any $t$ that is a true change-point, under the alternative hypothesis, $n^{2 / 3} [\delta(t, h) - 4] \geq 2 \tau_{\alpha}$. We have
	\begin{align} \label{a}
	\begin{split}		
	P(T_{TW_1}(t, h) > \vartriangle_{T_{TW_1}}) & = P\Big(n^{2 / 3} \big[\sigma_1 \big(\hat{Z}(t, h)\big) - 2\big] > n^{2 / 3} [\delta(t, h) - 4] - \tau_{\alpha}\Big) \\
	& = P\big(\sigma_1 \big(\hat{Z}(t, h)\big) > \delta(t, h) - 2 - n^{- 2 / 3} \tau_{\alpha}\big).
	\end{split}
	\end{align}
	
	Assume $P_1(t, h)$ and $P_2(t, h)$ are the true link probability matrices of groups $\{A_i\}_{i = t - h + 1}^{t}$ and $\{A_i\}_{i = t + 1}^{t + h}$. For proof convenience later, we denote matrices $B_1(t,h)$, $B_2(t,h)$, $V_2(t,h)$ all with zero diagonals and for all $i \neq j$,	
	$$B_{1, i j}(t,h) = \frac{P_{1, i j}(t, h) - P_{2, i j}(t, h)}{\sqrt{ (n - 1)\bigg\{\frac{1}{h} \hat{P}_{1, i j}(t, h) \left[ 1 - \hat{P}_{1, i j}(t, h)\right] + \frac{1}{h} \hat{P}_{2, i j}(t, h) \left[ 1 - \hat{P}_{2, i j}(t, h)\right]\bigg\} }},$$
	$$B_{2, i j}(t,h) = \frac{[P_{1, i j}(t, h) - P_{2, i j}(t, h)] - [\bar{A}_{1, i j}(t, h) - \bar{A}_{2, i j}(t, h)]}{\sqrt{ (n - 1)\bigg\{\frac{1}{h} \hat{P}_{1, i j}(t, h) \left[ 1 - \hat{P}_{1, i j}(t, h)\right] + \frac{1}{h} \hat{P}_{2, i j}(t, h) \left[ 1 - \hat{P}_{2, i j}(t, h)\right]\bigg\} }},$$
	$$V_{2, i j}(t,h) = \frac{[ P_{1, i, j}(t, h) - P_{2, i j}(t, h)] - [\bar{A}_{1, i j}(t, h) - \bar{A}_{2, i j}(t, h)]}{\sqrt{ (n - 1)\bigg\{\frac{1}{h} P_{1, i j}(t, h) \left[ 1 - P_{1, i j}(t, h)\right] + \frac{1}{h} P_{2, i j}(t, h) \left[1 - P_{2, i j}(t, h)\right]\bigg\} }}.$$
	Then the lower bound of $\sigma_1(t,h)$ can be obtained:
	\begin{align*}
	\sigma_1\big(\hat{Z}(t, h)\big) & \geq \sigma_1( B_1(t,h)) - \sigma_1( B_2(t,h)) \\
	& = \left[ \sigma_1( V_1(t,h)) - \sigma_1( V_2(t,h))\right] [ 1 + o_p(n^{- 2 / 3})] \\
	& \geq \delta(t,h) - 2 - n^{- 2 / 3} \tau_{\alpha},
	\end{align*}
	with probability at most $1 - 2 \alpha + o_n(1)$. The last inequality follows by noting that $V_2(t,h)$ is a generalized Wigner matrix. Similarly with proof of Lemma \ref{Asymptotic distributions of Z_star for two-sample}, we have $P\big(n^{2 / 3} \big[\sigma_1( V_2(t,h)) - 2\big] \leq \tau_{\alpha}\big) \geq 1 - 2 \alpha + o_n(1)$. Combining this with \eqref{a}, we have
	\begin{equation*}
	P( T_{TW_1}(t, h) > \vartriangle_{T_{TW_1}}) = 1 - 2 \alpha + o_n(1) = (1 - 1 / n)^{1 / {(2 h)}} + o_n(1) \rightarrow 1.
	\end{equation*}
	
	The above result implies that with probability of 1, all and only the change-points will be selected at the thresholding steps. Therefore, we have
	$$\lim_{n \rightarrow \infty} P\big(\cJ = \hat{\cJ} \big) = 1.$$
	

	\section{Academic calendar of MIT 2004--2005}
	The academic calendar of MIT we use in this paper is illustrated as follows. 

	\begin{table}[htbp]
		\centering
		\caption{Academic calendar of MIT from July 20, 2004 to June 14, 2005.}
		{\scalebox{0.75}{\begin{tabular}{l p{15cm}}
					\toprule			    
					Date & Event \\
					\midrule				
					August 6, 2004 & \textbf{Deadline for doctoral students to submit application for Fall Term Non-Resident status; Thesis due for September degree candidates}.  \\ 
					August 12, 2004 & \textbf{Continuing students final deadline to pre-reg on-line}.\\	
					August 13, 2004 & \textbf{Last day to go off the September degree list}.  \\
					August 16--17, 2004 & Summer Session Final Exam Period. \\
					August 23, 2004 & \textbf{Grades due}. \\
					August 27, 2004 & Term Summaries of Summer Session Grades.\\
					August 30, 2004 & Graduate Student Orientation activities begin. \\
					August 31, 2004 & English Evaluation Test for International students. \\
					September 6, 2004 & Labor Day--Holiday. \\
					September 7, 2004 & \textbf{Registration day}. \\
					September 8, 2004 & \textbf{First day of classes}. \\
					September 9--17, 2004 & Physical Education Petition Period. \\
					September 10, 2004 & \textbf{Degree application deadline}. \\
					September 14, 2004 & Committee on Graduate School Policy Meeting. \\
					September 15, 2004 & Faculty officers recommend degrees to Corporation. \\
					September 24, 2004 & \textbf{Minor completion date}.  \\
					September 30, 2004 & Last day to sing up family health insurance or waive individual coverage. \\
					October 1, 2004 & Deadline for completing Harvard cross-registration. \\
					October 8, 2004 & \textbf{Last day to add subjects to Registration}.  \\
					October 11, 2004 & Columbus Day--Holiday. \\
					October 15--17, 2004 & Family Weekend.\\
					October 2, 20046 & Second quarter Physical Education classes begin.  \\
					November 1, 2004 & Half-term subjects offered in second half of term begin. \\
					November 11, 2004 & Veteran's Day--Holiday. \\
					November 17, 2004 & \textbf{Last day to cancel subjects from Registration}.\\
					November 25--26, 2004 & Thanksgiving Vacation--Holiday. \\
					December 1, 2004 & \textbf{On-line pre-registration for Spring Term begins}. \\
					December 3, 2004 & \textbf{Subjects with no final/final exam}. \\	
					December 9, 2004 & \textbf{Last day of classes}. \\									
					\bottomrule
		\end{tabular}}} \label{tab calendar 1}
	\end{table}

	\begin{table}[htbp]
		\centering
		{\scalebox{0.75}{\begin{tabular}{l p{15cm}}
					\toprule			    
					Date & Event \\
					\midrule
					December 10, 2004 & Last day to submit or change Advanced Degree Thesis Title. \\
					December 13--17, 2004 & \textbf{Final exam period}. \\
					December 14--22, 2004 & \textbf{Grade deadline}. \\
					December 18, 2004 & Winter Vacation begins--Holiday.\\
					December 30, 2004 & \textbf{Spring pre-registration deadline}. \\
					January 2, 2005 & Winter Vacation ends. \\
					January 3, 2005 & \textbf{Deadline for doctoral students to submit applications for Spring Term Non-Resident status}.  \\
					January 6, 2005 & Term Summaries of Fall Term Grades. \\
					January 7, 2005 & \textbf{Thesis due}.   \\
					January 10, 2005 & Second-Year and Third-Year Grades Meeting.  \\
					January 11, 2005 & Fourth-Year Grades meeting; Committee on Graduate School Policy Meeting.   \\
					January 13, 2005 & \textbf{Final deadline for continuing students to pre-reg on-line}. \\
					January 14, 2005 & \textbf{Thesis due}. \\
					January 17, 2005 & Martin Luther King, Jr. Day--Holiday.\\
					January 19--20, 2005 & C.A.P. deferred action meeting.\\
					January 26, 2005 & English Evaluation Test for International students. \\
					January 26--28, 2005 & Some advanced standing exams and postponed finals. \\
					January 28, 2005 & Last day of January Independent Activities Period. \\
					January 31, 2005 & \textbf{Registration day}.  \\	
					February 1, 2005 & \textbf{First day of classes}.  \\
					February 2--11, 2005 & Physical Education Petition Period. \\	
					February 3, 2005 & \textbf{Grades due}. \\
					February 4, 2005 & \textbf{Registration deadline}.\\
					February 7, 2005 & Term summaries of Grades for IAP. \\ 
					February 8, 2005 & Committee on Graduate School Policy Meeting. \\
					February 11, 2005 & C.A.P. February Degree Candidates Meeting. \\
					February 16, 2005 & Faculty Officers recommend degrees to Corporation. \\
					February 18, 2005 & \textbf{Minor completion date}. \\
					February 21, 2005 & Presidents Day--Holiday. \\
					February 22, 2005 & \textbf{Monday schedule of classes to be held}.\\
					February 28, 2005 & Last day to sing up for family health insurance or waive individual coverage.\\
					March 4, 2005 & \textbf{Last day to add subjects to Registration}.  \\
					March 21--25, 2005 & Spring Vacation--Holiday. \\
					March 28, 2005 & Half-term subjects offered in second half of term begin. \\
					March 30, 2005 & Fourth quarter Physical Education classes begin. \\
					April 1, 2005 & Last day to submit or change Advanced Degree Thesis Title.\\
					April 7--10, 2005 & Campus Preview Weekend. \\
					April 18--19, 2005 & Patriots Day--Holiday.   \\
					April 21, 2005 & \textbf{Last day to cancel subjects from Registration}. \\
					April 29, 2005 & \textbf{Thesis due}. \\
					May 2, 2005 & \textbf{On-line pre-registration for Fall Term and Summer Session begins}.\\
					May 6, 2005 & \textbf{Subjects with no final/final exam}.  \\
					May 12, 2005 & \textbf{Last day of classes}. \\
					May 16--20, 2005 & \textbf{Final exam week}. \\
					May 17--24, 2005 & \textbf{Grade deadline}.  \\ 
					May 20, 2005 & \textbf{Last day to go off the June degree list}. \\
					May 26, 2005 & \textbf{Department grades meetings}.\\
					May 27, 2005 & Fourth-Year Grades Meeting.\\
					May 30, 2005 & Memorial Day--Holiday. \\
					May 31, 2005 & \textbf{Fall pre-registration deadline}. \\
					June 1, 2005 & First-Year Grades Meeting. \\
					June 2, 2005 & Doctoral Hooding Ceremony. \\
					June 3, 2005 & \textbf{Commencement}. \\
					June 14, 2005 & C.A.P. deferred action meeting. \\
					\bottomrule
		\end{tabular}}} 
	\end{table}

	\bibliographystyle{apalike}
	\bibliography{ref-hypothesis.bib}

\end{document}